\documentclass[lettersize,journal]{IEEEtran}
\usepackage{amsmath,amsfonts}
\usepackage{algorithmic}
\usepackage{algorithm}
\usepackage{array}
\usepackage[caption=false,font=normalsize,labelfont=sf,textfont=sf]{subfig}
\usepackage{textcomp}
\usepackage{stfloats}
\usepackage{url}
\usepackage{verbatim}
\usepackage[numbers,sort&compress]{natbib}
\usepackage{bm}
\usepackage{amssymb} 
\usepackage{multirow} 
\usepackage{booktabs}
\usepackage{amsmath,amsthm}
\usepackage{graphicx}  
\usepackage{xcolor}
\newtheorem{thm}{Theorem}
\newtheorem{definition}{Definition}

\newtheorem{assumption}{Assumption}

\newtheorem{problem}{Problem}
\newtheorem{remark}{Remark}

\begin{document}

\title{Synthesizing Safety in Infinite-Horizon Optimal Control for Disturbed High-Relative-Degree Systems via Barrier-Regulating Auxiliary Variables}

\author{Zhanglin Shangguan, Wei Xiao, Qi Li, Bo Yang, and Xinping Guan, \IEEEmembership{Fellow IEEE}
\thanks{This work was supported by the National Natural Science Foundation of China under Grant 62325306 and Grant 62273237. (\textit{Corresponding author: Bo Yang})}
\thanks{Z. Shangguan, Q. Li, B. Yang, and X. Guan are with the School of Automation and Intelligent Sensing, Shanghai Jiao Tong University, Shanghai, China, and also with the Key Laboratory of System Control and Information Processing, Ministry of Education of China, and the Shanghai Engineering Research Center of Intelligent Control and Management, Shanghai, China (e-mail: ditto331@sjtu.edu.cn; lqecho@sjtu.edu.cn; bo.yang@sjtu.edu.cn; xpguan@sjtu.edu.cn).}
\thanks{Wei Xiao is with the Robotics Engineering Department, Worcester Polytechnic Institute, Worcester, MA, USA, and also with the Computer Science and Artificial Intelligence Laboratory, Massachusetts Institute of Technology, Cambridge, MA, USA (e-mail: weixy@mit.edu).}}

\maketitle

\begin{abstract}
Optimal stabilization of safety-critical nonlinear systems requires balancing long-term performance and strict safety constraints. Existing quadratic-programming-based control barrier function (CBF) safety filters are point-wise and may exhibit myopic behavior and local trapping when the safeguarding action conflicts with the nominal optimal control. This paper develops a safety-aware infinite-horizon optimal control framework by embedding a barrier-Lyapunov function (BLF)-based safeguarding action into the system dynamics and introducing a barrier-regulating auxiliary variable, thereby reformulating the original constrained problem as an unconstrained one on an extended state space. To mitigate local trapping, we introduce an adaptive alignment-conditioned tangential excitation orthogonal to the safety direction, with activation adaptively modulated by the degree of directional alignment between the nominal and safeguarding controllers, and incorporate it as an admissible $\mathcal{L}2$ disturbance in an $H\infty$ formulation. For high-relative-degree systems under disturbances, we further augment the recursive high-order safe-set construction with barrier compensation terms to obtain a high-order BLF and formulate an adversarial disturbance attenuation problem, which is approximately solved via safe-exploration-enhanced online critic learning. Simulations demonstrate reduced local trapping, improved safety--performance trade-offs, and safe operation under disturbances.
\end{abstract}

\begin{IEEEkeywords}
Barrier-regulating auxiliary variable; barrier-Lyapunov functions; disturbed high-relative-degree systems; safe exploration. 
\end{IEEEkeywords}

\section{Introduction}
\IEEEPARstart{M}{odern} control systems operate in uncertain environments where safety, control performance, and disturbance rejection must be addressed in a unified manner, particularly in safety-critical applications such as aerospace and autonomous systems. To synthesize safety and stabilization objectives, control Lyapunov function–control barrier function (CLF–CBF) frameworks have been widely adopted to enforce safety constraints through point-wise trade-offs \cite{ames2016control}. However, due to their inherently myopic optimization nature, such methods may suffer from local trapping and fail to capture long-term performance considerations \cite{reis2020control,tan2024undesired}. These limitations motivate an infinite-horizon perspective, where long-term behavior is explicitly optimized via Hamilton–Jacobi–Bellman (HJB) formulations. Reinforcement learning (RL), particularly Lyapunov-based actor–critic architectures rooted in adaptive control theory, has been extensively studied as a data-driven approach for approximating HJB solutions with stability guarantees \cite{vamvoudakis2010online,bhasin2013novel,kamalapurkar2016efficient}. Nevertheless, existing RL-based optimal control methods primarily emphasize stability and performance, and generally lack mechanisms to enforce state safety constraints throughout the learning process. As a result, integrating rigorous safety guarantees into learning-based infinite-horizon optimal control remains an open problem.

Aligned with this perspective, this work addresses three fundamental challenges. 
First, incorporating safety constraints into infinite-horizon optimal control introduces a nontrivial trade-off between long-term performance optimization and safety enforcement. 
Second, barrier-based safeguarding mechanisms may induce locally trapped behaviors, where the system stagnates in certain regions of the state space despite remaining safe, thereby preventing convergence to the desired objective. 
Third, extending safety guarantees to disturbed nonlinear systems with high relative degree requires robustness-enhanced high-order safeguarding controllers that ensure disturbance rejection and closed-loop stability. 
Addressing these challenges calls for safe RL architectures that explicitly account for system structure, exploration behavior, and external disturbances while preserving both performance optimality and provable safety.

\textbf{Related works.}
Existing safe reinforcement learning approaches for continuous-time control follow multiple design paradigms depending on how safety constraints are enforced. 
One representative line of work incorporates safety directly into the optimal control formulation by augmenting the value function with barrier terms. 
In particular, reciprocal barrier functions (RBFs) have been embedded into the value function to transform state-constrained optimal control problems into unconstrained ones, enabling the application of standard RL solution methods \cite{yang2020safe,cohen2020approximate,wang2023safe}. 
However, as discussed in \cite{mahmud2021safety,cohen2023safe}, the inherent coupling between safety guarantees and performance learning in this approach introduces critical vulnerabilities. 
Specifically, the RBF-enhanced value function may lose continuous differentiability under certain system dynamics and safety constraints, potentially leading to safety violations during deployment.

To strictly enforce safety constraints, control barrier function (CBF)-based safeguarding controllers have been widely used to filter RL-based control policies \cite{cheng2019end,cheng2023safe}. 
Nevertheless, using CBFs as a filtering mechanism results in point-wise optimal control and therefore suffers from intrinsic myopic behavior \cite{krstic2023inverse}. 
To mitigate this issue, \cite{cohen2023safe} introduces a Lyapunov-like CBF (LCBF) by shifting a reciprocal CBF (RCBF) to the equilibrium point, enabling safety enforcement to align with stabilization objectives. 
Owing to the unboundedness of RCBFs near the safety boundary, the resulting safeguarding controller provides tunable and minimally invasive safety intervention. 
Building on this idea, \cite{bandyopadhyay2023lagrangian} generalizes LCBFs into barrier-Lyapunov functions (BLFs) and embeds the BLF-induced safety constraint into the value function through a Lagrangian formulation. 
In this framework, a state-dependent Lagrange multiplier is learned online using an actor–critic–Lagrangian architecture, enabling adaptive regulation of the safety–performance trade-off. 
The work in \cite{wang2025learning} further extends LCBF-based safeguarding controllers to higher-relative-degree systems by introducing a gradient similarity condition to construct adaptive dynamics for the trade-off parameter. 
However, the resulting safeguarding mechanism becomes ineffective in the presence of system uncertainties or external disturbances. 
In contrast, although \cite{bandyopadhyay2023lagrangian} is restricted to relative-degree-one systems, its formulation naturally accommodates robustness.

Despite these developments, an additional challenge arises when barrier-based safeguarding interacts with optimal stabilization objectives. 
Specifically, the closed-loop system may become locally trapped in certain regions of the state space while remaining safe, preventing convergence toward the desired objective. 
Recent studies have attempted to address related issues from different perspectives. 
The work in \cite{wang2025learning2} incorporates CBF-based safeguarding into an infinite-horizon HJB performance objective to alleviate the myopic nature of point-wise safety filtering. 
However, whether this formulation can avoid locally trapped behaviors depends on whether the value function successfully learns safety information embedded in the optimization objective. 
Alternatively, \cite{gonccalves2024control} introduces a circulation field orthogonal to the CBF gradient to drive the system away from such stagnation regions. 
While effective in certain settings, the proposed circulation mechanism does not provide explicit conditions ensuring compatibility with nonlinear system dynamics, and may potentially compromise closed-loop stability. 

\textbf{Contributions.} This paper develops a safety-aware infinite-horizon optimal control framework for nonlinear systems, together with an adaptive angle-dependent tangential excitation mechanism and a robust high-relative-degree extension. The main contributions are summarized as follows:
\begin{enumerate}
\item \textbf{Safety-aware infinite-horizon reformulation with barrier regulation.}
We embed the BLF-based safeguarding action into the system dynamics and introduce a barrier-regulating auxiliary variable, which transforms the safety-constrained infinite-horizon optimal control problem into an unconstrained problem on an extended state space. This construction enables adaptive adjustment of the safeguarding authority while preserving the infinite-horizon optimal control structure.
\item \textbf{Adaptive alignment-conditioned tangential excitation for mitigating local trapping.}
We design a tangential excitation orthogonal to the safety direction, whose activation level is adaptively modulated by the directional alignment between the nominal optimal controller and the safeguarding controller. The resulting tangential motion does not directly affect the barrier decrease condition, helps the system escape local traps in nonconvex safe sets, and can be incorporated as an admissible $\mathcal{L}2$ disturbance in the $\mathcal{H}\infty$ formulation.
\item \textbf{Robust extension to disturbed high-relative-degree systems.}
We augment the recursive high-order safe-set construction with barrier compensation terms to handle disturbances, yielding a high-order BLF for robust safety enforcement. The problem is recast as a zero-sum differential game and approximately solved via safe-exploration-enhanced online critic learning.
\end{enumerate}

\textbf{Notation.} A continuous function $\alpha:\mathbb{R}_{\ge 0}\mapsto\mathbb{R}_{\ge 0}$ is called a class $\mathcal{K}_{\infty}$ function if it is strictly increasing and $\alpha(0)=0,\ \lim_{r\mapsto\infty}\alpha(r)=\infty$. The operator $\Vert \cdot \Vert$ denotes the Euclidean norm; given the compact set $\mathcal{X}\subset\mathbb{R}^n$ and a continuous mapping $(\cdot):\mathcal{X}\mapsto \mathbb{R}^N$, define $\Vert (\cdot) \Vert_{\infty}\triangleq\sup_{x\in\mathcal{X}} \Vert(\cdot)\Vert$; $\lambda_{\max}(M)$, $\lambda_{\min}(M)$ return the maximum and minimum eigenvalues of a matrix $M$, respectively; $\mathbb{I}_n \in \mathbb{R}^{n \times n}$ represents an $n \times n$ identity matrix; define $\mathcal{B}_r(x_0)\triangleq \{x\in\mathbb{R}^n | \Vert x-x_0 \Vert\leq r \}$ as a closed ball centered at $x_0\in\mathbb{R}^n$ with radius $r>0$; the Lie derivative of a differentiable function $V:\mathbb{R}^n\mapsto\mathbb{R}$ along a vector field $\zeta:\mathbb{R}^n\mapsto\mathbb{R}^n$ is defined as $\mathcal{L}_{\zeta}V(\cdot)\triangleq \nabla V(\cdot)^\top\zeta(\cdot)$.

\section{Preliminaries and Problem Formulations}
We consider a control-affine nonlinear system with parametric uncertainty given by
\begin{flalign}\label{system}
& \dot{x} = f(x) + g(x)u + \omega(x)d(x), 
\end{flalign}
where $x\in\mathcal{X}\subseteq\mathbb{R}^n$ and $u\in\mathcal{U}\subseteq\mathbb{R}^m$ denote the system state and control input, respectively, $\omega(x)\in\mathbb{R}^{n\times k}$, and $d(x)\in\mathbb{R}^k$ denotes the unmatched external disturbance. The nonlinear functions $f:\mathcal{X}\to\mathbb{R}^n$ and $g:\mathcal{X}\to\mathbb{R}^{n\times m}$ are assumed to be Lipschitz continuous on $\mathcal{X}$, such that system \eqref{system} admits a unique solution under any continuous control input $u:\mathbb{R}_{\geq 0}\to\mathcal{U}$.

\textbf{Safety.} A system is considered safe if its states remain within a user-defined safe set $\mathcal{S}\subseteq \mathbb{R}^n$. This set is characterized by a continuously differentiable function $h:\mathbb{R}^n\mapsto\mathbb{R}$, such that:
\begin{flalign}
\mathcal{S} &= \{x\in\mathbb{R}^n : h(x)\ge 0 \}, \\
\partial\mathcal{S} &= \{x\in\mathbb{R}^n : h(x)=0 \}, \\
{\rm Int}(\mathcal{S}) &= \{ x\in\mathbb{R}^n : h(x)>0 \}, 
\end{flalign}
where $\partial\mathcal{S}$ denotes the boundary of $\mathcal{S}$, and ${\rm Int}(\mathcal{S})$ represents its interior. Moreover, outside the safe set, $h(x)<0$.

To ensure that a system remains within $\mathcal{S}$ for all time (i.e., forward invariance), Control Barrier Functions (CBFs) provide a systematic control design framework. 
\begin{definition}
\cite{ames2016control} A differentiable function $h:\mathcal{X}\to\mathbb{R}$ is called a CBF for the system \eqref{system} over $\mathcal{S}$ if, for a \( \mathcal{K}_{\infty} \) function $\alpha$ and all $x\in\mathcal{X}$, the following condition holds: 
\begin{flalign}
& \sup_{u\in\mathcal{U}}\Big\{\mathcal{L}_{f}h(x) + \mathcal{L}_{g}h(x)u \Big\}\ge -\alpha(h(x)). 
\end{flalign}
\end{definition}
Many practical constraints exhibit
\emph{high relative degree} \(r\ge 2\); that is, the control input \(u\) does
not appear in the time–derivatives of the safety function \(h(x)\) up to order
\(r-1\). To handle this case, we recursively construct auxiliary functions and
the associated high-order safe set \cite{xiao2021high}.

\begin{definition}\label{HOCBF_def}
\cite{xiao2021high} For a continuously differentiable function \(h:\mathbb{R}^n\to\mathbb{R}\) with relative degree
\(r\ge 2\), define the sequence
\begin{subequations}
\begin{flalign}
\psi_0(x) &= h(x),\\
\psi_k(x) &= \dot{\psi}_{k-1}(x)
           + \alpha_k\!\left(\psi_{k-1}\right),\; k=1,\dots,r-1, 
\end{flalign}
\end{subequations}
where each \(\alpha_k\) is a class \(\mathcal{K}_\infty\) function.
The \emph{high-order safe set} is then
\begin{flalign}
& \mathcal{D}_r
:=\bigl\{x\in\mathcal{X}:\,
\psi_k(x)\ge 0,\ \forall k=0,\dots,r-1
\bigr\}. 
\end{flalign}
Since \(\psi_{r-1}\) has relative degree one with respect to the input, we can directly apply the standard (relative-degree-one) CBF condition to it:
\begin{flalign}
& \mathcal{L}_f \psi_{r-1}(x)+\mathcal{L}_g\psi_{r-1}(x)u
+ \alpha_r\big(\psi_{r-1}(x)\big) \ge 0, 
\label{HOCBF}
\end{flalign}
for some class \(\mathcal{K}_\infty\) function \(\alpha_r\).
Any function \(h(x)\) for which there exist functions
\(\alpha_1,\dots,\alpha_r\) such that the inequality
\eqref{HOCBF} holds for all \(x\in\mathcal{X}\)
is called a \emph{high-order control barrier function} (HOCBF).
\end{definition}

Condition~\eqref{HOCBF} is affine in the control input $u$, as the control appears only through the term $\mathcal{L}_g\mathcal{L}_f^{\rho-1} h(x)\,u$. This affine structure enables the direct synthesis of a point-wise optimal safety filter via quadratic programming (QP). Specifically, given a nominal control input $u_{\mathrm{no}}(x)$ designed for performance, the safety-critical control input can be obtained by solving, at each state $x$, the following QP:
\begin{flalign}
&u_{\rm pw}(x) = \arg\min_{u\in\mathcal{U}} \|u - u_{\mathrm{no}}(x)\|^2 \notag\\
&\mathrm{s.t.}\quad 
 \mathcal{L}_f \psi_{r-1}(x)+\mathcal{L}_g\psi_{r-1}(x)u
+ \alpha_r\big(\psi_{r-1}(x)\big) \ge 0. \notag 
\end{flalign}

\textbf{Problems:} CBFs enable point-wise optimal safety filters, typically implemented via QP, by minimally modifying nominal control inputs. While computationally efficient, such point-wise optimality is inherently myopic and does not account for infinite-horizon control performance. In contrast, infinite-horizon optimal control \cite{lewis2012optimal} captures long-term performance objectives but faces fundamental difficulties in enforcing safety, particularly for nonlinear systems with high relative degree and uncertain drifts. This raises the challenge of bridging safety and infinite-horizon control performance within a unified framework, while enabling online learning with formal safety guarantees.

\begin{figure}[t]
    \centering
    \includegraphics[width=\linewidth]{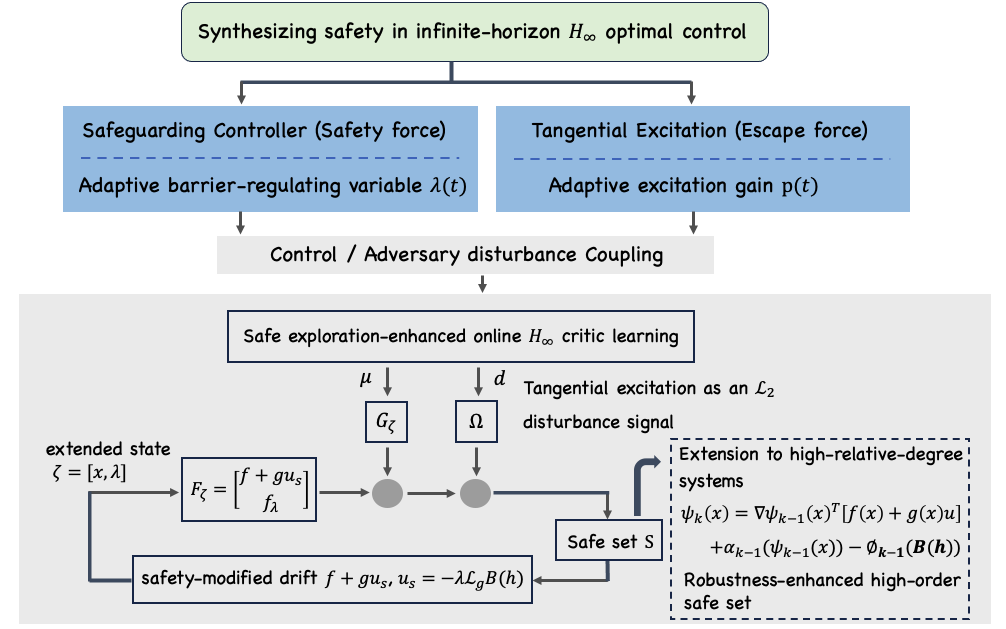}
    \caption{Barrier-regulating variable-enhanced optimal control}
    \label{structure}
\end{figure}

\section{Synthesizing Safety Filters in Infinite-Horizon Control Performance}
This section investigates the synthesis of safety filters from an infinite-horizon optimal control perspective. Starting from a basic example of an infinite-horizon optimal control problem augmented with a barrier-Lyapunov function (BLF), we elucidate the underlying principles governing the trade-off between safety enforcement and optimal stabilization. Building on this intuition, the framework is then extended to high-relative-degree systems, where safety constraints cannot be directly enforced through the control input. By drawing connections to HOCBF, we develop a systematic construction of high-order BLFs (HOBLF) that enables the realization of minimally invasive safety filters while preserving infinite-horizon control performance.

\subsection{Motivation: minimally invasive safeguarding controllers}
We consider the following state-constrained infinite-horizon optimal control problem.
\begin{problem}(Constrained Optimal Control) 
\begin{subequations}
\begin{flalign}
\min_{u\in\mathcal{U}} J(x_0) = &\int_{t_0}^{\infty} \underbrace{\left( x^\top Q x + \frac{1}{2}u^\top R u \right)}_{\ell(x,u)}\, \mathrm{d\tau}, \\
\mathrm{s.t.}\quad &\dot{x}=f(x)+g(x)u, \\
&x(t)\in\mathcal{S}, \forall t\in[t_0,\infty),  
\end{flalign}    
\end{subequations}
where $Q\succeq 0$, $R\succ 0$ and $\ell(x,u)$ is the instantaneous cost.
\end{problem}
\begin{definition}
Consider the prescribed safe set $\mathcal{S}=\{x:h(x)\ge 0\}$. Using $h(x)$, one can construct a composite function $B(h(x))$, called a barrier-Lyapunov function (BLF), such that $B(h(0))=0, B(h(x)) \ge 0, \forall x\in\mathrm{int}(\mathcal{S})$ and $\lim_{x\to\partial\mathcal{S}} B(h(x)) = \infty.$
\end{definition}
\begin{remark}
The BLF is closely related to the notion of reciprocal CBFs. In particular, both constructions enforce safety by rendering the barrier function unbounded as the system state approaches the boundary of the safe set. A key distinction is that the BLF satisfies $B(h(0))=0$, which can be interpreted as a shifted version of a reciprocal CBF that assigns zero barrier value at the equilibrium. 

By enforcing safety through the composite function $B(h(x))$, the state constraint $x(t)\in\mathcal{S}$ for all $t\ge 0$ can be equivalently encoded as the Lyapunov-like condition $\dot{B}(h(x))\le 0$. This formulation is particularly appealing, as it aligns safety enforcement with the standard Lyapunov stability paradigm, thereby providing a unified perspective in which safety and stabilization objectives are treated in a consistent manner.
\end{remark}

Let $V^*(x) = \min_{u}\int_{t}^\infty \ell(x,u) {\rm d\tau}$ be the optimal ``cost-to-go" value function. By introducing a Lagrange multiplier $\lambda\left(x(t)\right) \ge 0$ for the BLF constraint $\dot{B}\leq 0$, we can incorporate the safety into the value function as 
\begin{flalign}\label{lagrangian value function}
& V^*(x) = \min_{u}\max_{\lambda\ge 0}\int_{t}^{\infty}\Big( \ell(x,u)+\lambda(x)\frac{\rm d B}{\rm d t} \Big){\rm d}\tau. 
\end{flalign}
Based on the fundamental principle of dynamic programming \cite{kirk2004optimal}, we can obtain the stationary Hamilton-Jacobi-Bellman (HJB) equation: for any $x\in\mathcal{X}$,
\begin{flalign}\label{constrained HJB}
\min_{u}\max_{\lambda\ge 0} &\Big\{  \ell(x,u) + \nabla V^{*}(x)^\top\big[ f(x) + g(x)u \big]\notag \\
& +\lambda(x)\nabla B(x)^\top [f(x) + g(x)u ]\Big\}= 0. 
\end{flalign}
For a fixed state $x$, the terms in \eqref{constrained HJB} that depend on the control input $u$ form a quadratic function of $u$, consisting of a positive definite quadratic term and affine terms. Since the input weighting matrix satisfies $R = R^\top \succ 0$, the constrained HJB equation \eqref{constrained HJB} is strictly convex with respect to $u$, which guarantees the uniqueness of the minimizer and justifies the use of the KKT conditions \cite{almubarak2021hjb}. According to the KKT conditions, the optimal solution $\left(\lambda^*,u_{\lambda}^*\right)$ satisfies: for any $x\in\mathcal{X}$,
\begin{subequations}\label{eq:KKT_conditions}
\begin{flalign}
&\frac{\partial}{\partial u}\Big( \ell(x,u) + (\lambda^*\nabla B 
+ \nabla V^{*})^\top g u \Big)\Big|_{u = u_{\lambda}^{*}} = 0  
\label{eq:KKT_a} \\
& \lambda^{*}(x(t))\, \nabla B(x)^\top(f+g u) = 0  
\label{eq:KKT_b} \\
& \lambda^{*}(x(t)) \ge 0 
\label{eq:KKT_c}
\end{flalign}
\end{subequations}
where \eqref{eq:KKT_b} is the complementary slackness condition. Since the system is autonomous and the BLF-based safety constraint is stationary, the Lagrange multiplier $\lambda^*(x)$ is a function of the state $x$. The stationary condition \eqref{eq:KKT_a} allows us to derive the optimal control policy $u_{\lambda}^*$ as
\begin{flalign}\label{optimal safe controller}
& u_{\lambda}^*(x) = -R^{-1}\big[ \mathcal{L}_{g}V^{*}(x)+\lambda^*(x)\mathcal{L}_{g} B(x)\big]^\top. 
\end{flalign}
Substituting \eqref{optimal safe controller} into the complementary slackness condition \eqref{eq:KKT_b}, we can obtain the optimal Lagrange multiplier as
\begin{flalign}\label{optimal lamda}
& \lambda^*(x) = \max\left( \frac{-\nabla B(x)^\top(f(x)+g(x)u_{\rm o}(x))} {R_{bg}(x)}, 0\right), 
\end{flalign}
where $u_{\rm o}(x) = -R^{-1}\mathcal{L}_{g}V^*(x)^{\top}$, $R_{g}(x) = g(x)R^{-1}g(x)^\top$, and $R_{bg}(x) = \nabla B(x) R_{g}(x) \nabla B(x)^\top$.

According to \eqref{optimal safe controller}, we denote the nominal optimal controller by $u_{\rm o}(x) = -\mathcal{L}_g V^*(x)^\top$, and the safeguarding controller by $u_s(x) = - \lambda^*(x)\mathcal{L}_g B(x)^\top$. For notational simplicity, the factor $R^{-1}$ is omitted in the subsequent expressions whenever no confusion arises. The above derivation reveals that, from an infinite-horizon perspective, safety and stabilization can be systematically balanced by introducing a BLF-based Lagrange multiplier into the optimal control formulation. The optimal control input naturally decomposes into a nominal performance-oriented controller $u_{\rm o}$ and an additive safeguarding controller $u_s$, where the latter is activated only when the safety constraint becomes critical. As detailed in \cite{cohen2023safe}, the above structure suggests a key insight: exact computation of the optimal Lagrange multiplier $\lambda^*(x)$ is not essential for guaranteeing safety. Instead, it suffices to select $\lambda(x)$ as any prescribed positive constant $\lambda>0$. In this case, the safeguarding controller $u_s(x) = -\lambda\,\mathcal{L}_g B(x)^\top$ remains effective in enforcing the safety constraint, while minimally perturbing the nominal optimal controller.

To see this, consider the time derivative of the barrier-Lyapunov function $B$ under the safeguarding controller $u_s(x) = -\lambda \mathcal{L}_g B(x)^\top$, where $\lambda>0$ is a prescribed constant. The derivative of $B$ along the system trajectories satisfies
\begin{flalign}
&\dot{B}(x) 
= \nabla B(x)^\top \big( f(x) + g(x)u_s(x) \big) \notag\\
&\le \|\nabla B(x)\|\,\|f(x)\|
- \lambda\, \nabla B(x)^\top g(x) g(x)^\top \nabla B(x) \notag\\
&\le \|\nabla B(x)\|\,\|f(x)\|
- \lambda\, \underline{\sigma}\!\left(g(x)g(x)^\top\right)\|\nabla B(x)\|^2 \notag\\
&= \|\nabla B(x)\|^2
\left(
\frac{\|f(x)\|}{\|\nabla B(x)\|}
- \lambda\, \underline{\sigma}\!\left(g(x)g(x)^\top\right)
\right), 
\end{flalign}
where $\underline{\sigma}(\cdot)$ denotes the minimum eigenvalue.

By construction of the BLF, $\|\nabla B(x)\|\to\infty$ as $x\to\partial\mathcal{S}$. Consequently, for any fixed $\lambda>0$, there exists a neighborhood of the boundary $\partial\mathcal{S}$ in which $\dot{B}(x)<0$. This implies that the BLF remains bounded along system trajectories, thereby guaranteeing forward invariance of the safe set $\mathcal{S}$.

Moreover, the divergence of $\nabla B(x)$ near the constraint boundary endows the safeguarding mechanism with an inherent robustness property, as the corrective action dominates bounded model uncertainties and disturbances in the vicinity of $\partial\mathcal{S}$. As a result, safety can be enforced using an arbitrarily small but strictly positive constant $\lambda$, yielding a minimally invasive safeguarding controller that preserves the nominal infinite-horizon control performance.

\subsection{Auxiliary system for safety trade-off}
Inspired by the analysis in the previous subsection, we observe that as long as the Lagrange multiplier $\lambda(x)$ remains strictly positive for all $t>0$, the safeguarding controller term $\lambda \mathcal{L}_g B(x)$ is sufficient to guarantee the safety of the system state. Moreover, the magnitude of the Lagrange multiplier directly reflects the relative importance assigned to safety enforcement.

In practice, it is desirable to endow the Lagrange multiplier $\lambda$ with sufficient flexibility such that it can adapt according to the safety requirement while remaining consistent with an optimal performance objective. Following a similar idea to that in \cite{xiao2021adaptive}, we introduce an auxiliary dynamical system for the Lagrange multiplier in the general form
\begin{flalign}\label{auxiliary_system_general}
& \dot{\lambda} = f_{\lambda}(x,\lambda) + g_{\lambda}(x,\lambda)\, v, 
\end{flalign}
where $v \in \mathbb{R}$ is a virtual control input, and $f_{\lambda}$ and $g_{\lambda}$ characterize the intrinsic and input dynamics of the multiplier, respectively. 

Although $\lambda>0$ is sufficient from a theoretical perspective, it is beneficial in practice to introduce a strictly positive safety-related lower bound $\lambda_{\mathrm{ref}}(x)>0$. 
The function $\lambda_{\mathrm{ref}}(x)$ can be designed based on geometric margins, risk indicators, or prior engineering knowledge. 
Its purpose is to prevent $\lambda$ from becoming excessively small when the system approaches the constraint boundary, thereby enabling earlier safeguarding intervention. 
Since $\lambda$ is a design variable rather than a physical state, introducing such a reference lower bound does not alter the original safety requirement but instead provides an additional conservative margin to enhance robustness and smoothness.

Define $h_{\lambda}(x,\lambda)=\lambda-\lambda_{\mathrm{ref}}(x)$. 
Imposing the CBF condition $\dot h_\lambda+\alpha_\lambda(h_\lambda)\ge 0$ with a class-$\mathcal{K}$ function $\alpha_\lambda(\cdot)$ leads to
\begin{flalign}\label{lambda_cbf_constraint}
& f_{\lambda}(x,\lambda)
+ g_{\lambda}(x,\lambda)\, v
- \nabla_x \lambda_{\mathrm{ref}}(x)^\top
\Big( f_{x,s}(x,\lambda) + g(x)u \Big)\notag\\
&+ \alpha_{\lambda}\!\left(\lambda-\lambda_{\mathrm{ref}}(x)\right)
\ge 0, 
\end{flalign}
where $f_{x,s}(x,\lambda)=f(x)-\lambda(x) g(x)\mathcal{L}_g B(x)^\top$ denotes the safety-modified drift dynamics of the original system.

Combining the positivity constraint and \eqref{lambda_cbf_constraint}, the admissible set of the virtual input $v$ is given by
\begin{flalign}
&\mathcal{U}_{\lambda}(x,\lambda,u)
=
\Big\{
v \in \mathbb{R} \ \Big| \
 f_{\lambda}(x,\lambda)
+ g_{\lambda}(x,\lambda)\, v
- \nabla \lambda_{\mathrm{ref}}(x)^\top \notag\\
&\big( f_{x,s}(x,\lambda)+ g(x)u \big) 
+ \alpha_{\lambda}\!\left(\lambda-\lambda_{\mathrm{ref}}(x)\right)
\ge 0
\Big\}. 
\end{flalign}

Assume that the Lagrange multiplier admits a desired steady-state value $\lambda^*$. 
Let $\tilde{\lambda}=\lambda-\lambda^*$ and introduce the augmented state $\zeta = [x^\top,\tilde{\lambda}]^\top$. 
Since $\lambda^*$ is constant, we have $\dot{\tilde{\lambda}}=\dot{\lambda}$. 
By incorporating the safeguarding controller for the original state constraint $h(x)$, the safety-modified state dynamics can be written as $\dot{x}=f_{x,s}(x,\lambda)+g(x)u$, where $f_{x,s}(x,\lambda)=f(x)-\lambda(x) g(x)\mathcal{L}_g B(x)^\top$. 
Combining \eqref{auxiliary_system_general} with the safety-modified state dynamics yields the safety-modified extended system
\begin{flalign}\label{extended_system}
\dot{\zeta}
&=
\begin{bmatrix}
\dot{x} \\
\dot{\tilde{\lambda}}
\end{bmatrix}
=
\underbrace{\begin{bmatrix}
f_{x,s}(x,\lambda) \\
f_{\lambda}(x,\lambda)
\end{bmatrix}}_{F_{\zeta}(\zeta)}
+
\underbrace{\begin{bmatrix}
g(x) & 0 \\
0 & g_{\lambda}(x,\lambda)
\end{bmatrix}}_{G_{\zeta}(\zeta)}
\underbrace{\begin{bmatrix}
u \\
v
\end{bmatrix}}_{\mu}. 
\end{flalign}

We can then formulate a new infinite-horizon unconstrained optimal control problem with the above safety-modified extended system \eqref{extended_system} as follows:
\begin{problem}(Unconstrained Optimal Control) 
\begin{subequations}
\begin{flalign}\label{unconstrained_ocp}
\min_{\mu} \; J(\zeta_0) 
= &\int_{t_0}^{\infty} 
\underbrace{\Big( \mathcal{Q}_{\zeta}(\zeta)+ \mathcal{R}_{\zeta}(\mu)\Big)}_{\ell(\zeta,\mu)}
\, \mathrm{d\tau}, \\
\mathrm{s.t.}\quad 
&\dot{\zeta}=F_{\zeta}(\zeta)+G_{\zeta}(\zeta)\mu, 
\end{flalign}    
\end{subequations}
where $\mathcal{Q}_{\zeta}(\zeta) = \zeta^\top Q_{\zeta} \zeta$ with $Q_{\zeta}={\rm diag}(Q,q_{\lambda})$, and $\mathcal{R}_{\zeta}(\mu)=\frac{1}{2}\mu^\top R_{\zeta} \mu$ with $R_{\zeta} = {\rm diag}(R,r_v)$.
\end{problem}
It is worth noting that the safeguarding controller is incorporated into the drift term of the system dynamics, thereby forming a safety-modified dynamics $f_{x,s}$. This embedding allows the original constrained optimal control problem to be recast as an unconstrained one over the extended state space. Moreover, the additional state introduced in \eqref{extended_system} enables adaptive adjustment of the safeguarding controller, which provides extra flexibility in balancing safety and optimality. Consequently, the resulting problem can be solved using standard methods for infinite-horizon optimal control.

\subsection{Tangential $\mathcal{L}_2$ excitation for escaping local trapping}

Although the adaptive safeguarding gain strengthens the normal component of the control action near the safety boundary, local trapping may still arise in nonconvex safe sets. 
Such behavior typically occurs when the nominal optimal direction and the safeguarding direction become nearly aligned, resulting in a vanishing total control input even though the state is not at the desired equilibrium. Geometrically, this situation is characterized by the alignment of the control effectiveness vectors $\mathcal{L}_g V(x)$ and $\mathcal{L}_g B(x)$.

To detect this configuration, we define an alignment measure between $\mathcal{L}_g V(x)$ and $\mathcal{L}_g B(x)$ as
\begin{flalign}
\cos\theta(x)
=
\frac{\mathcal{L}_g V(x)^\top \mathcal{L}_g B(x)}
{\|\mathcal{L}_g V(x)\|\,\|\mathcal{L}_g B(x)\|}. 
\end{flalign}
When $\cos\theta(x)$ approaches $1$, the two directions become nearly collinear, indicating a risk of local trapping.
This alignment condition is used to activate a tangential excitation mechanism.

The tangential direction is constructed by projecting $\mathcal{L}_g V(x)$ onto the subspace orthogonal to $\mathcal{L}_g B(x)$. 
Let
\begin{flalign}
\Pi_\perp(x)
=
\mathbb{I}_m
-
\frac{\mathcal{L}_g B(x)\mathcal{L}_g B(x)^\top}
{\|\mathcal{L}_g B(x)\|^2}, 
\end{flalign}
then the instantaneous tangential direction is defined as
\begin{flalign}
\tilde u_t(x)
=
\Pi_\perp(x)\,\mathcal{L}_g V(x). 
\end{flalign}
By construction, $\tilde u_t(x)$ lies in the null space of $\mathcal{L}_g B(x)$ and therefore does not directly interfere with the barrier decrease condition.

To generate a finite-energy excitation, we introduce a stable first-order filter
\begin{flalign}\label{excitation_gain_p}
\dot p = -\alpha_t p + \nu(x), 
\quad
u_t
=
k_t\, p\,
\frac{\tilde u_t(x)}
{\|\tilde u_t(x)\|+\varepsilon}, 
\end{flalign}
where $\alpha_t>0$, $k_t>0$, and $\varepsilon>0$ is a small regularization constant. The activation signal $\nu(x)$ is designed based on the alignment condition between $\mathcal{L}_gV(x)$ and $\mathcal{L}_gB(x)$. 
Specifically, we define 
\begin{flalign} 
\nu(x)=\kappa\,\max\!\left\{0,\, \cos\theta_0-\cos\theta(x) \right\}, 
\end{flalign} 
where $\theta(x)\in[0,\pi]$ denotes the directional misalignment between the nominal optimal control direction and the safeguarding direction, $\theta_0\in(0,\pi)$ is a prescribed triggering threshold, and $\kappa>0$ determines the excitation intensity. Hence, the tangential excitation is activated only when $\theta(x)>\theta_0$. This corresponds to the regime in which the two directions become sufficiently conflicting, so that the nominal control and the safeguarding action tend to counteract each other and a stagnation point may arise. In practice, one may simply choose $\theta_0=\pi/2$, so that excitation is triggered whenever the directional conflict exceeds $90^\circ$. Since the filter is exponentially stable and $\nu(x)$ is bounded, the resulting tangential input satisfies $\int_0^\infty \|u_t(t)\|^2 dt < \infty$, i.e., $u_t\in \mathcal{L}_2[0,\infty)$.

The overall control law becomes $u = u_o + u_s + u_t$. 
Because $u_t$ is orthogonal to $\mathcal{L}_g B(x)$, forward invariance of the safe set is preserved provided that the safeguarding gain is sufficiently large. 
From the learning perspective, the tangential component is interpreted as a structured disturbance, and the critic update is therefore formulated under an $H_\infty$ framework. 
As the critic converges and internalizes safety information, the activation signal diminishes and the filter state decays exponentially, leading to $u_t(t)\to 0$. 
Hence, the excitation is transient and does not alter the asymptotic optimal behavior.

\section{Robustness Enhancement for Disturbed High-Relative-Degree Systems}
In this subsection, we discuss how the above analysis developed for the extended system \eqref{extended_system} can be extended to high-relative-degree systems subject to external disturbances, where the system dynamics are described by \eqref{system}. The robustness enhancement problem is investigated from two complementary perspectives. First, we focus on guaranteeing safety in the presence of disturbances, ensuring that the state remains within the admissible set despite external uncertainties. Second, we address stability preservation under disturbances, aiming to maintain desirable closed-loop performance and convergence properties of the system. 

\subsection{Robustness-enhanced high-order BLF}
We first extend the notion of BLFs to systems with high relative degree. For high-relative-degree systems, it holds that $\mathcal{L}_g B(h(x)) = 0$, which implies that the safeguarding controller constructed in the previous subsection cannot directly influence the system input and therefore becomes ineffective.

To address this limitation, we draw inspiration from the HOCBF construction introduced in Definition~\ref{HOCBF_def}. In particular, for a safety constraint $h(x)$ with relative degree $r$, the associated high-order safety set is characterized by the function $\psi_{r-1}(x)$ obtained through the recursive HOCBF construction. This formulation ensures that the control input appears explicitly in the derivative of $\psi_{r-1}(x)$, thereby enabling safety enforcement for high-relative-degree systems. A barrier-Lyapunov functional can then be constructed by applying a barrier functional to the highest-order safety function, resulting in a high-order barrier-Lyapunov function of the form $B(\psi_{r-1}(x))$. 

Due to the presence of external disturbances, the high-order safe set construction introduced in Definition~X is no longer directly applicable. If the disturbance term $d(x)$ is ignored and the safety sets are constructed solely based on the nominal dynamics $\dot{x} = f(x) + g(x)u$, potential safety violations may occur. Motivated by the inherent disturbance attenuation property of BLFs in first-order systems, we enhance the robustness of the high-order safety set construction by introducing additional barrier compensation terms at each recursive step. Specifically, the safety sets are defined as
\begin{subequations}
\begin{flalign}
&\psi_0(x) = h(x), \\
&\psi_k(x) = \nabla \psi_{k-1}(x)^\top \bigl(f(x) + g(x)u\bigr)\\
&\qquad\qquad+ \alpha_{k-1}\bigl(\psi_{k-1}(x)\bigr)
- \phi_{k-1}\bigl(B(h(x))\bigr), 
\end{flalign}    
\end{subequations}
where $\phi_{k-1}(\cdot)$ denotes an additional barrier compensation term introduced to counteract the effect of disturbances.

The intuition behind this construction is as follows. Although the disturbance term $d(x)$ is not explicitly incorporated in the recursive definition and the nominal dynamics are used to construct the high-order safety sets, the introduced barrier compensation term grows unbounded as the system state approaches the boundary of the safe set. In particular, when $x \to \partial \mathcal{S}$, the barrier-Lyapunov functional satisfies $B(h(x)) \to \infty$. Therefore, as long as the disturbance $d(x)$ is bounded, its effect can be compensated by the barrier term.

To formally justify this claim, we first consider the first-order case, from which the result can be recursively extended to the general $r$-th order construction. Assume that the disturbance satisfies $\|d(x)\| \le \|d\|_{\infty} < \infty$. Taking the time derivative of $h(x)$ along the disturbed dynamics yields
\begin{flalign}
\dot{h}(x)
&= \nabla h(x)^\top \bigl(f(x) + g(x)u + d(x)\bigr) \notag\\
&\ge \nabla h(x)^\top \bigl(f(x) + g(x)u\bigr)
- \|\nabla_x h\|\,\|d\|. 
\end{flalign}
By enforcing the nominal safety condition with barrier compensation, we further obtain
\begin{flalign}
&\dot{h}(x)
\ge -\alpha_0\bigl(h\bigr)
+ \phi_0\bigl(B(h)\bigr)
- \|\nabla h\|\,\|d\|. 
\end{flalign}
For any bounded disturbance $d(x)$, there exists a sufficiently small boundary layer of the safe set in which the barrier feedback dominates the disturbance effect. Specifically, there exists a constant $\delta>0$ such that 
$\phi_0\!\bigl(B(h(x))\bigr)-\|\nabla h(x)\|\,\|d(x)\|\ge 0$ holds for all states satisfying $0<h(x)\le\delta$. This condition implies that sufficiently close to the safety boundary, the barrier feedback remains strong enough to counteract the disturbance. As a result, the set $\mathcal{R}_{\beta}(0)$ is forward invariant, which establishes safety for the first-order case.

By recursively applying the same argument, the robustness property naturally extends to the $r$-th order safety set $\psi_r(x)$. Finally, by applying the barrier-Lyapunov functional $B(\cdot)$ to $\psi_r(x)$, a high-order BLF is obtained, which ensures robust safety for high-relative-degree systems in the presence of bounded disturbances.

\subsection{Disturbance-attenuated analysis via ISS perspective}
We further investigate the stabilization performance of the safety-modified extended system in the presence of external disturbances. 
Consider the disturbed extended dynamics given by
\begin{flalign}\label{extended_sys_disturbance}
&\dot{\zeta}
=
F_{\zeta}(\zeta)
+
G_{\zeta}(\zeta)\mu
+
\Omega(\zeta)d, 
\end{flalign}
with the extended disturbance gain $\Omega(\zeta) =
\begin{bmatrix}
\omega(x) \\
0
\end{bmatrix}$. It is noted that the disturbance affects only the original system states and does not directly act on the auxiliary safety-related state introduced by the system augmentation.

From an input-to-state stability (ISS) viewpoint, robustness against disturbances can be characterized through a dissipativity-type inequality. 
Specifically, if there exists a continuously differentiable positive definite function $V(\zeta)$ such that along the system trajectories
$\dot V(\zeta)\le -\mathcal{Q}_{\zeta}(\zeta)-\mathcal{R}_{\zeta}(\mu)+\gamma^2\|d\|^2$, where $\gamma>0$ denotes the prescribed disturbance attenuation level, 
then the closed-loop system admits an $\mathcal{L}_2$-gain bound from the disturbance input $d$ to the performance output $(\zeta,\mu)$. 
Integrating this inequality over $[t_0,\infty)$ gives
$\int_{t_0}^{\infty}(\mathcal{Q}_{\zeta}(\zeta)+\mathcal{R}_{\zeta}(\mu))\,d\tau
\le V(\zeta_0)+\gamma^2\int_{t_0}^{\infty}\|d(\tau)\|^2 d\tau$,
which shows that the disturbance energy is attenuated by the factor $\gamma$ and the system is input-to-state stable in an energy sense.

Motivated by the above ISS-based analysis, we formulate a robust optimal control problem by treating the external disturbance as an adversarial input that seeks to degrade system performance. To this end, the disturbance attenuation objective is incorporated into the cost functional by penalizing the disturbance energy.

\begin{problem}(Robust Optimal Stabilization)\label{robust pro}
\begin{subequations}
\begin{flalign}
\min_{\mu}\ \max_{d}\ 
J_{\gamma}(\zeta_0)
= &
\int_{t_0}^{\infty}
\Big(
\mathcal{Q}_{\zeta}(\zeta)
+
\mathcal{R}_{\zeta}(\mu)
-
\gamma^2 \|d\|^2
\Big)
\, d\tau, \\
\mathrm{s.t.}\quad
\dot{\zeta}
= &
F_{\zeta}(\zeta)
+
G_{\zeta}(\zeta)\mu
+
\Omega(\zeta)d. 
\end{flalign}
\end{subequations}
\end{problem}

For a given attenuation level $\gamma$, the existence of a stabilizing solution to this problem guarantees that the closed-loop extended system achieves an $L_2$-gain from the disturbance $d$ to the performance output no greater than $\gamma$. The smallest achievable value of $\gamma$ characterizes the best attainable disturbance rejection capability of the system.

Following Problem~\ref{robust pro}, define the optimal value function of the zero-sum differential game as
$V^{\ast}(\zeta)
\triangleq
\min_{\mu(\cdot)}\max_{d(\cdot)}
J_{\gamma}(\zeta,\mu(\cdot),d(\cdot))$.
Define the Hamiltonian function
\begin{flalign}
\mathcal{H}(\zeta,\mu,d,\nabla V)
\triangleq &
\mathcal{Q}_{\zeta}(\zeta)
+
\mathcal{R}_{\zeta}(\mu)
-
\gamma^2 \|d\|^2
\notag\\
&+\nabla V(\zeta)^{\top}
\big(
F_{\zeta}
+
G_{\zeta}\mu
+
\Omega d
\big). 
\label{eq:Hamiltonian_def}
\end{flalign}
Under standard regularity assumptions and the Isaacs condition, the optimal value function satisfies the stationary Hamilton--Jacobi--Isaacs (HJI) equation
$0=\min_{\mu}\max_{d}\mathcal{H}(\zeta,\mu,d,\nabla V^*(\zeta))$.
The corresponding saddle-point policies are given by
$\mu^*(\zeta)=\arg\min_{\mu}\mathcal{H}(\zeta,\mu,d,\nabla V^*(\zeta))$
and
$d^*(\zeta)=\arg\max_{d}\mathcal{H}(\zeta,\mu,d,\nabla V^*(\zeta))$.

In particular, since $\mathcal{R}_{\zeta}(\mu)=\tfrac{1}{2}\mu^{\top}R_{\zeta}\mu$ and the disturbance penalty is quadratic, the stationary conditions
$\partial \mathcal{H}/\partial \mu = 0$ and $\partial \mathcal{H}/\partial d = 0$ yield the closed-form expressions
\begin{subequations}
\begin{flalign}
\mu^{\ast}(\zeta)
&=
-
R_{\zeta}^{-1}
G_{\zeta}(\zeta)^{\top}\,
\nabla V^{\ast}(\zeta),
\label{eq:mu_star_closed_form}
\\
d^{\ast}(\zeta)
&=
\frac{1}{2\gamma^2}
\Omega(\zeta)^{\top}\,
\nabla V^{\ast}(\zeta). 
\label{eq:d_star_closed_form}
\end{flalign}
\end{subequations}

Substituting \eqref{eq:mu_star_closed_form}--\eqref{eq:d_star_closed_form} into \eqref{eq:Hamiltonian_def} leads to the equivalent nonlinear partial differential equation
\begin{flalign}
0
=&
\mathcal{Q}_{\zeta}(\zeta)
+
\nabla V^{\ast}(\zeta)^{\top}F_{\zeta}
-\nabla V^{\ast}(\zeta)^{\top}
G_{\zeta}R_{\zeta}^{-1}G_{\zeta}^{\top}
\nabla V^{\ast}(\zeta)
\notag\\ &+\frac{1}{4\gamma^{2}}
\nabla V^{\ast}(\zeta)^{\top}
\Omega(\zeta)\Omega(\zeta)^{\top}
\nabla V^{\ast}(\zeta). 
\label{eq:HJI_expanded}
\end{flalign}
Equation~\eqref{eq:HJI_expanded} reveals that the disturbance channel contributes an adversarial (maximizing) quadratic term in the gradient of $V^{\ast}$, which captures the worst-case disturbance effect and establishes the $H_\infty$-type robustness of the resulting optimal policy.

In particular, the value function $V^{\ast}(\zeta)$ is unknown \emph{a priori} and cannot be obtained in closed form for general nonlinear systems, which significantly limits the direct implementation of the optimal control law. To address this challenge, the next section develops an online approximation framework based on safe exploration--enhanced RL. By leveraging data-driven learning mechanisms together with safety-aware exploration strategies, the proposed approach seeks to approximate the optimal value function and the associated saddle-point policies, while preserving robustness and safety guarantees during the learning process

\section{Safe-Exploration-Enhanced Online Learning Implementation}
This section presents a safe exploration--enhanced online reinforcement learning framework to approximately solve the HJI equation derived in the previous section. By leveraging a barrier-augmented system formulation, the learning process is naturally constrained by a safeguarding controller, such that each instantaneous evaluation of the Hamiltonian error can be interpreted as safe exploration without violating prescribed safety specifications. The proposed approach enables online approximation of the optimal value function and the associated saddle-point policies using measured system data, while preserving robustness against disturbances throughout the learning phase. Finally, the closed-loop stability of the resulting learning-based control system is analyzed.
 
\subsection{Value function approximation}
To facilitate online implementation, the optimal value function $V^{\ast}(\zeta)$ associated with the extended system is approximated using a neural network (NN) parameterization.
By virtue of the universal approximation property of neural networks, any sufficiently smooth function defined on a compact set can be approximated to arbitrary accuracy by a linear-in-the-parameters structure with nonlinear basis functions.

Accordingly, the optimal value function $V^{\ast}(\zeta)$ is expressed as
\begin{flalign}
& V^{\ast}(\zeta)
=
W^{\ast\top}\sigma(\zeta)
+
\epsilon(\zeta), 
\end{flalign}
where $W^{\ast}\in\mathbb{R}^{J}$ denotes the ideal weight vector,
$\sigma(\zeta)\in\mathbb{R}^{J}$ is a known vector of continuously differentiable basis functions (e.g., neural network activation functions), and $\epsilon(\zeta)$ represents the bounded approximation error. Correspondingly, we can write the optimal control policy and worst-case disturbance as
\begin{subequations}
\begin{flalign}
& \mu^*(\zeta) = -R_{\zeta}^{-1}G_{\zeta}(\zeta)^\top \Big( \nabla\sigma(\zeta)^\top W^* + \nabla\epsilon(\zeta)\Big), \label{ideal_policy}\\
& d^*(\zeta) = \frac{1}{2\gamma^2} \Omega(\zeta)^\top  \Big( \nabla\sigma(\zeta)^\top W^* + \nabla\epsilon(\zeta)\Big).  \label{ideal_disturbance} 
\end{flalign}
\end{subequations}

In practice, since $W^{\ast}$ is unknown, an estimate $\hat W$ is employed to construct the approximate value function
\begin{flalign}
& \hat V(\zeta)
=
\hat W^{\top}\sigma(\zeta). 
\end{flalign}
It is assumed that the basis functions $\sigma(\zeta)$ and their gradients $\nabla\sigma(\zeta)$ are known and bounded over the domain of interest.

Based on the approximate value function $\hat V(\zeta)$, the corresponding approximate optimal control policy and worst-case disturbance are obtained by substituting $\nabla \hat V(\zeta)$ into the stationary optimality conditions derived from the HJI equation.
Specifically, the approximate control policy and the approximate worst-case disturbance are given by
\begin{subequations}
\begin{flalign}
& \hat{\mu}(\zeta)
=
-
R_{\zeta}^{-1}
G_{\zeta}(\zeta)^{\top}
\nabla \sigma(\zeta)^\top \hat W, \label{es_policy}\\
& \hat d(\zeta)
=
\frac{1}{2\gamma^{2}}
\Omega(\zeta)^{\top}
\nabla \sigma(\zeta)^\top \hat W. \label{es_disturbance}
\end{flalign}    
\end{subequations}

The above parameterizations establish a foundation for online learning of the value function and the associated saddle-point policies. In the subsequent subsection, adaptive learning laws are developed to update the weight estimate $\hat W$ based on Hamiltonian error minimization.

\subsection{Safe-exploration-enhanced adaptive critic design}
Based on the value function approximation developed in the previous subsection, this subsection presents a safe exploration--enhanced adaptive critic design for online learning of the optimal value function.

By substituting the approximate control policy $\hat{\mu}(\zeta)$ and the approximate worst-case disturbance $\hat d(\zeta)$ into the Hamiltonian function, the instantaneous Hamiltonian error is defined as
\begin{flalign}\label{be}
& \delta(\zeta,\hat{\mu},\hat d)
=
\ell(\zeta,\hat{\mu},\hat d)\notag\\
& +
\nabla \hat V(\zeta)^{\top}
\Big(
F_{\zeta}(\zeta)
+
G_{\zeta}(\zeta)\hat{\mu}(\zeta)
+
\Omega(\zeta)\hat d(\zeta)
\Big), 
\end{flalign}
where
$\ell(\zeta,\hat{\mu},\hat d)
=
\mathcal{Q}_{\zeta}(\zeta)
+
\mathcal{R}_{\zeta}(\hat{\mu})
-
\gamma^2\|\hat d\|^2$.
The Hamiltonian error quantifies the deviation from the HJI optimality condition at the current state and serves as a measurable signal for critic learning.

To enhance data richness and facilitate satisfaction of the persistent excitation (PE) condition, additional state samples are collected along exploratory trajectories.
Specifically, let $\{\zeta_{t_i}\}_{i=1}^{N_s}$ denote a set of sampled state points obtained from the system trajectory or auxiliary exploration signals within a compact domain.
For each sampled state $\zeta_{t_i}$, the corresponding Hamiltonian error is computed as
\begin{flalign}\label{sample_be}
\delta_{t_i}
=&
\ell(\zeta_{t_i},\hat{\mu}(\zeta_{t_i}),\hat d(\zeta_{t_i}))+ \nabla \hat V(\zeta_{t_i})^{\top}\cdot\notag\\
&
\Big(
F_{\zeta}(\zeta_{t_i})
+
G_{\zeta}(\zeta_{t_i})\hat{\mu}(\zeta_{t_i})
+
\Omega(\zeta_{t_i})\hat d(\zeta_{t_i})
\Big). 
\end{flalign}
These sampled Hamiltonian errors collectively provide informative regression data for updating the critic parameters, thereby improving learning efficiency and robustness.

\begin{remark}
It is worth emphasizing that the Hamiltonian error is evaluated based on the barrier-augmented extended system dynamics.
As a result, the exploration process is implicitly constrained by the safeguarding controller embedded in the system formulation, and the resulting data correspond to \emph{safe exploration} of the original system.
In contrast, computing the Hamiltonian error using the original system dynamics without safety augmentation would primarily emphasize control performance while potentially violating safety constraints.
The proposed framework allows for a flexible trade-off between safety and performance by appropriately combining Hamiltonian error evaluations obtained under the safeguarding controller and those computed from the nominal system dynamics.
\end{remark}

In line with the recursive least-squares-based method, the estimated critic weights $\hat{W}$ are updated to minimize the accumulated error $E = \int_{t_0}^t \left( \delta_t^2 + \sum_{i=1}^N\delta_{ti}^2 \right)d\tau $, as
\begin{subequations}
\begin{flalign}
&\dot{\hat{W}} = - k_{1}\Gamma\frac{\rho}{\xi^2}\delta_t -\frac{k_{2}}{N}\Gamma\sum_{i=1}^N \frac{\rho_i}{\xi_i^2}\delta_{ti}, \label{critic law}\\[3pt]
&\dot{\Gamma} = \eta\Gamma - k_{1}\Gamma\frac{\rho\rho^\top}{\xi^2}\Gamma - \frac{k_{2}}{N}\Gamma\sum_{i=1}^N \frac{\rho_i\rho_i^\top}{\xi_i^2}\Gamma, \label{Gamma law} 
\end{flalign}
\end{subequations}
where $\rho(t)=\nabla\sigma\left( F_{\zeta} + G_{\zeta}\hat{\mu} + \Omega\hat{d} \right)$ is the regressor vector, $\xi(t):=\sqrt{1+k_3 \rho^\top \rho}$ is the normalized term, $k_{1}, k_{2}>0$ and $\eta>0$ are user-defined learning rate and forgetting constant, respectively.

To guarantee the convergence of the states $x$ and estimated critic weights $\hat{W}$, the following excitation conditions, boundedness assumptions and Lipschitz continuity of the optimal controller are established.
\begin{assumption}\label{PE_condition}
\cite{kamalapurkar2016efficient} There exists a strictly positive constant $b\in\mathbb{R}_{> 0}$ such that for all $t\ge 0$, 
\begin{flalign}
& 0< b \leq \inf_{t\ge 0} \left\{\sigma_{\min}\left( \frac{k_{1}\rho\rho^\top}{\xi^2}+\frac{k_{2}}{N} \sum_{i=1}^N\frac{\rho_i\rho_i^\top}{\xi_i^2} \right)\right\}. 
\end{flalign}
\end{assumption}
\begin{remark}
This is the relaxed rank-like persistently exciting (PE) condition induced by the concurrent learning technique. It can be observed that by incorporating additional sampled state trajectories for BE extrapolation, the above rank-like PE condition can be more readily satisfied. Moreover, under this relaxed PE condition, the least-squares matrix gain $\Gamma$ is bounded as $\underline{\Gamma} \mathbb{I}_L \leq \Gamma(t) \leq \overline{\Gamma} \mathbb{I}_L$, where $\overline{\Gamma} > \underline{\Gamma}>0$.
\end{remark}

\subsection{Stability analysis}
Before we proceed to the stability analysis of the extended system \eqref{extended_sys_disturbance} under the approximate optimal control policy $\hat{\mu}$ and worst-case disturbance $\hat{d}$, we simplify the Hamiltonian error $\delta$ defined as \eqref{be}. Note that $\ell(\zeta,\mu^*,d^*) + \nabla V^*(\zeta)^\top [F_{\zeta}+G_{\zeta}\mu^*+\Omega d^*] = 0$; subtracting this equality from the Hamiltonian error \eqref{be} and following the derivations in \cite{kokolakis2022safety,kamalapurkar2016efficient} yields
\begin{flalign}\label{bee}
\delta = -\tilde{W}^\top \rho + \Delta(x), 
\end{flalign}
where the function $\Delta(x) : \mathbb{R}^n\to \mathbb{R}$ is uniformly bounded over $\mathcal{S}$ such that the bound $\overline{\Vert \Delta(x) \Vert}$ decreases with decreasing $\overline{\Vert \epsilon(x) \Vert}$ and $\overline{\Vert \nabla \epsilon(x) \Vert}$.

\begin{thm}
Consider the extended system \eqref{extended_sys_disturbance}. Provided Assumption \ref{PE_condition} be satisfied. Under the adaptive critic updating law \eqref{critic law}-\eqref{Gamma law}, the augmented state $Z=\left[\zeta^\top,\tilde{W}^\top\right]^\top$ is stable in the sense of being uniformly ultimately bounded (UUB). 
\end{thm}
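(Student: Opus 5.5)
We will follow the standard Lyapunov-based UUB argument for least-squares critic learning with concurrent (extrapolated) Bellman errors, in the style of \cite{kamalapurkar2016efficient}. The candidate Lyapunov function on the augmented state $Z=[\zeta^\top,\tilde W^\top]^\top$ is
\begin{flalign}
V_L(Z,t)=V^{\ast}(\zeta)+\tfrac12\tilde W^\top\Gamma^{-1}(t)\tilde W, \notag
\end{flalign}
where $V^\ast$ is the positive definite value function of Problem~\ref{robust pro}. By the Remark following Assumption~\ref{PE_condition}, $\underline{\Gamma}\mathbb{I}\le\Gamma\le\overline{\Gamma}\mathbb{I}$. Together with the standard existence of class-$\mathcal{K}$ bounds $\underline v(\|\zeta\|)\le V^\ast(\zeta)\le\overline v(\|\zeta\|)$ on the compact domain, this gives $\underline v_L(\|Z\|)\le V_L\le\overline v_L(\|Z\|)$, uniformly in $t$.

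\textbf{Critic part.} With $\tilde W=W^\ast-\hat W$, we have $\dot{\tilde W}=-\dot{\hat W}$. Substituting the error decomposition \eqref{bee}, $\delta=-\tilde W^\top\rho+\Delta$, into \eqref{critic law} gives $\dot{\tilde W}$ as a linear term in $\tilde W$ plus $\Delta$-driven terms. Using $\tfrac{d}{dt}\Gamma^{-1}=-\Gamma^{-1}\dot\Gamma\Gamma^{-1}$ with \eqref{Gamma law}, the $\rho\rho^\top/\xi^2$ terms combine, and we expect
\begin{flalign}
\tfrac{d}{dt}\Big(\tfrac12\tilde W^\top\Gamma^{-1}\tilde W\Big)\le -\tfrac12\Big(\tfrac{\eta}{\overline\Gamma}+b\Big)\|\tilde W\|^2+\iota_1\|\tilde W\|, \notag
\end{flalign}
where $b$ comes from Assumption~\ref{PE_condition} and $\iota_1$ depends on $k_1,k_2,\overline{\|\Delta\|}$ and the bound $\|\rho/\xi^2\|\le 1/(2\sqrt{k_3})$. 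We must take care with the sign of the forgetting term. The negative term should come from the PE bound and dominate the forgetting term, which requires $\eta$ small relative to $b\underline\Gamma$; otherwise we will use the version of \eqref{Gamma law} with a $-\eta\Gamma$ sign.

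\textbf{State part.} Along \eqref{extended_sys_disturbance}, with $\mu=\hat\mu$ and $d=\hat d$, write $\hat\mu=\mu^\ast+R_\zeta^{-1}G_\zeta^\top(\nabla\sigma^\top\tilde W+\nabla\epsilon)$ and $\hat d=d^\ast-\tfrac{1}{2\gamma^2}\Omega^\top(\nabla\sigma^\top\tilde W+\nabla\epsilon)$. The HJI equation \eqref{eq:HJI_expanded} gives $\nabla V^{\ast\top}(F_\zeta+G_\zeta\mu^\ast+\Omega d^\ast)=-\mathcal{Q}_\zeta-\mathcal{R}_\zeta(\mu^\ast)+\gamma^2\|d^\ast\|^2$. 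The term $\gamma^2\|d^\ast\|^2$ is the delicate one. We handle it either by assuming $\gamma$ is large enough that $\mathcal{Q}_\zeta(\zeta)-\gamma^2\|d^\ast\|^2\ge q\|\zeta\|^2$ on the compact set, or by invoking the dissipativity inequality of the ISS subsection. Either way,
\begin{flalign}
\dot V^\ast\le -q\|\zeta\|^2+\iota_2\|\tilde W\|+\iota_3, \notag
\end{flalign}
where $\iota_2,\iota_3$ are built from the sup-norms of $\nabla V^\ast$, $G_\zeta$, $\Omega$, $\nabla\sigma$ and $\nabla\epsilon$.

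\textbf{Combination.} Adding the two estimates and using Young's inequality gives
\begin{flalign}
\dot V_L\le -q\|\zeta\|^2-c\|\tilde W\|^2+\iota\le -v_l(\|Z\|)\quad\text{for }\|Z\|\ge v_l^{-1}(\iota), \notag
\end{flalign}
with $c=\tfrac14(\eta/\overline\Gamma+b)$. Provided the initial condition is small enough that the ultimate bound $\underline v_L^{-1}(\overline v_L(v_l^{-1}(\iota)))$ lies inside the compact set where the approximations hold (and where safety, hence boundedness of $B$, is preserved by the safeguarding drift), Khalil's UUB theorem yields the claim.

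\textbf{Main obstacle.} We expect the hardest step to be the cross term $\nabla V^{\ast\top}G_\zeta R_\zeta^{-1}G_\zeta^\top\nabla\sigma^\top\tilde W$ and the adversarial term $\gamma^2\|d^\ast\|^2$. Unbounded barrier gradients could spoil the sup-norm bounds, so we will restrict all estimates to a compact sublevel set of $B$ that is forward invariant by the safety argument of Section~III.
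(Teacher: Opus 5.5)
Your proposal is correct and follows the paper's proof essentially step for step: the same Lyapunov candidate $V^*(\zeta)+\frac12\tilde W^\top\Gamma^{-1}\tilde W$, the HJI identity for the state part, the bound from Assumption~\ref{PE_condition} for the critic part, and Khalil's UUB theorem. The one real difference is the cross terms: the paper turns them into $\varpi_1\|\tilde W\|^2$ via $\nabla V^{*\top}G_\zeta=-\mu^{*\top}R_\zeta$, which tacitly needs $\varpi_3/2>\varpi_1$, while your linear sup-norm bound avoids that gain condition. Two of your worries are unnecessary. First, the paper simply bounds $2\gamma^2\|d^*\|^2\le 2\gamma^2\overline{\|d^*\|}^2$ and folds it into the constant $\varpi_4$, so no largeness of $\gamma$ and no dissipativity argument is needed. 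Second, the $+\eta\Gamma$ term in \eqref{Gamma law} already has the favorable sign: it gives $\frac{d}{dt}\Gamma^{-1}=-\eta\Gamma^{-1}+\cdots$, which is exactly where the $-\frac{\eta}{2\overline\Gamma}\|\tilde W\|^2$ in your own bound comes from, and switching to $-\eta\Gamma$ would make that term destabilizing.
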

\begin{proof}
Consider the continuously differentiable Lyapunov function candidate
\begin{flalign}\label{lyapunov candidate}
& \mathcal{V}(Z,t) = V^*(\zeta) + V_c(\tilde{W},t), 
\end{flalign}
where $V^*(\zeta)$ is the optimal value function and $V_c(\tilde{W},t) = \frac{1}{2}\tilde{W}^\top \Gamma^{-1}(t)\tilde{W}$.

Taking the time derivative of $V^*(\zeta)$ along the system trajectories yields 
$\dot V^* = \nabla V^{*\top}(F_\zeta + G_\zeta\hat{\mu} + \Omega_\zeta \hat d)$.
Adding and subtracting the optimal policies $(\mu^*,d^*)$ and using the HJI equation gives
$\dot V^* = -\mathcal Q_\zeta(\zeta)-\mathcal R_\zeta(\mu^*)+\gamma^2\|d^*\|^2 
+ \nabla V^{*\top}G_\zeta(\hat\mu-\mu^*) 
+ \nabla V^{*\top}\Omega_\zeta(\hat d-d^*)$.
Using $\nabla V^{*\top}G_\zeta=-\mu^{*\top}R_\zeta$ and 
$\nabla V^{*\top}\Omega_\zeta=2\gamma^2 d^{*\top}$ together with Young's inequality yields
\begin{flalign}\label{vtm}
\dot{V}_{\zeta}\leq &-\mathcal{Q}_{\zeta}(\zeta)  +2\gamma^2 \| d^* \|^2 \notag\\
&+\frac{1}{2}\left\|  R_{\zeta}^{\frac{1}{2}}(\hat{\mu}-\mu^*) \right\|^2 + \gamma^2 \left\| \hat d - d^* \right\|^2. 
\end{flalign}
From \eqref{ideal_policy}--\eqref{ideal_disturbance} and 
\eqref{es_policy}--\eqref{es_disturbance}, the Nash equilibrium approximation errors satisfy
$\hat\mu-\mu^* = R_\zeta^{-1}G_\zeta^\top(-\nabla\sigma^\top\tilde W+\nabla\epsilon)$ and
$\hat d-d^* = \frac{1}{2\gamma^2}\Omega_\zeta^\top(\nabla\sigma^\top\tilde W-\nabla\epsilon)$.
Substituting these expressions into \eqref{vtm} and applying Young's inequality leads to
\begin{flalign}\label{dot_optimal_V_final}
\dot V^* \le &-\mathcal Q_\zeta(\zeta) + 2\gamma^2\|d^*\|^2
+ \varpi_1\|\tilde W\|^2 + \varpi_2\overline{\|\nabla\epsilon\|}^2, 
\end{flalign}
where $\varpi_1 = \frac{1}{2\gamma^2}\overline{\|\Omega_{\zeta}\|}^2\overline{\| \nabla\sigma \|}^2+\| R_{\zeta}^{-1} \| \overline{\| G_{\zeta} \|}^2\overline{\| \nabla\sigma \|}^2$ and $\varpi_2 = \frac{1}{2\gamma^2}\overline{\|\Omega_{\zeta}\|}^2 + \| R_{\zeta}^{-1} \| \overline{\| G_{\zeta} \|}^2$.

The time derivative of $V_c(\tilde{W},t)$ is given by $\dot{V}_c(\tilde{W},t) = \tilde{W}^\top \Gamma^{-1}\Big(W-\dot{\hat{W}}\Big) - \frac{1}{2}\tilde{W}^\top \Gamma^{-1}\dot{\Gamma}\Gamma^{-1} \tilde{W}$. Substituting the critic update law \eqref{critic law}-\eqref{Gamma law} into it yields $\dot{V}_c=-\frac{1}{2}\tilde{W}^\top \Big( \eta\Gamma^{-1}+\frac{k_1\rho\rho^\top}{\xi^2}+\frac{k_2}{N}\sum_{i=1}^N \frac{\rho_i\rho_i^\top}{\xi_i^2} \Big)\tilde{W}+\tilde{W}^\top\Gamma^{-1}W$. According to Assumption \ref{PE_condition}, we can further simplify it to

\begin{flalign}\label{result_Vc}
 \dot{V}_c(\tilde{W},t) &\leq -\frac{1}{2}\Big( \frac{\eta}{\overline{\Gamma}}+b \Big)\| \tilde{W} \|^2 + \frac{1}{\overline{\Gamma}}\overline{\| W \|}\| \tilde{W} \| + \Delta_{w}(x) \notag\\
&\leq -\frac{1}{2}\varpi_3 \| \tilde{W} \|^2 + \frac{\overline{\| W \|}^2}{2\overline{\Gamma}^2\varpi_3}+\Delta_w(x), 
\end{flalign}
where $\Delta_w(x):\mathbb{R}^n\to\mathbb{R}$ is uniformly bounded over $\mathcal{S}$ and consists of terms related to $\Delta(x)$ whose bound decreases with decreasing $\overline{\|\epsilon(x)\|}$ and $\overline{\|\nabla\epsilon(x)\|}$.

Recall the result for the time derivative of $V^*(\zeta)$. Along with the time derivative \eqref{result_Vc}, we have that
\begin{flalign}
\dot{\mathcal{V}}(Z)&\leq - \mathcal{Q}_{\zeta}(\zeta) - \Big(\frac{\varpi_3}{2}-\varpi_1\Big)\|\tilde{W}\|^2 + \varpi_4 \notag\\
&\leq - \lambda_{\min}(\Upsilon) \| Z\|^2 + \varpi_4, 
\end{flalign}
where $\varpi_4 = \varpi_2\overline{\| \nabla\epsilon\|}^2+\frac{\overline{\| W \|}^2}{2\overline{\Gamma}^2\varpi_3}+\Delta_w+2\gamma^2\overline{\| d^* \|}^2$ and $\Upsilon={\rm diag}(Q_{\zeta},\frac{\varpi_3}{2}-\varpi_1)$.
Hence, we have that
\begin{flalign}
& \dot{\mathcal{V}} \leq -\frac{\lambda_{\min}(\Upsilon)}{2}\Vert Z \Vert^2,\ \forall \Vert Z \Vert \ge \frac{2\varpi_4}{\lambda_{\min}(\Upsilon)}. 
\end{flalign}
The Lyapunov candidate $\mathcal{V}(Z,t)$ can be bounded as $\underline{v}(\Vert Z \Vert)\leq\mathcal{V}\leq\overline{v}(\Vert Z \Vert)$, where $\underline{v}(\cdot)$ and $\overline{v}(\cdot)$ are class $\mathcal{K}$ functions. According to \cite[Theorem 4.18]{khalil2002nonlinear}, we can conclude that the extended state $\zeta$ and the estimation error of critic weights $\tilde{W}$ are UUB. Specifically, all trajectories satisfy $\limsup_{t\to\infty}{\Vert Z \Vert \leq \overline{v}^{-1}\left( \underline{v}\left( \frac{2\varpi_4}{\lambda_{\min}(\Upsilon)} \right) \right)}$.
\end{proof}

\section{Simulation Results and Discussions}

This section presents three sets of simulations to validate the proposed framework. 
First, we illustrate the local trapping phenomenon caused by the interaction between safety constraints and optimal stabilization, and show that tangential excitation can drive the system away from such behavior. 
Second, for a nonlinear system with relative degree one, we demonstrate the tradeoff between safety enforcement and optimal stabilization performance. 
Third, for high-relative-degree systems under disturbances, we verify the effectiveness of the barrier-compensated high-order safe set and demonstrate how the adaptive $\lambda$ multiplier dynamically adjusts the safety weight to balance performance and safety.

\subsection{Tangential force as an $\mathcal{L}_2$ disturbance for escaping local trapping}

This subsection illustrates how an \emph{adaptive tangential force} can help the closed-loop system escape local trapping that may arise when the safeguarding mechanism dominates the control action near the boundary of a nonconvex safe set. The tangential force is treated as a constructed disturbance signal in the $L_2$ sense and is used to provide \emph{safe excitation} along BLF level sets, while preserving forward invariance of the safe set.

We consider the verification plant given by the two-dimensional integrator $\dot{x}=u$ with initial condition $x(0)=[-4,\,4]^\top$, simulation horizon $T=10\,\mathrm{s}$, and step size $\Delta t = 10^{-3}$. The quadratic cost matrices are $Q=\mathrm{diag}(2,2)$ and $R=\mathrm{diag}(1,1)$. Safety is defined as remaining outside the union of two disks that form a lens-shaped forbidden region. 
Let $x = [x_1,x_2]^\top$. The two disk centers are $c_{1,2} = c \pm d R(\theta)e_1$, where $c=[-2,\,2]^\top$, 
$d=1.2$, $\theta=45^\circ$, $e_1=[1,0]^\top$, and $R(\theta)$ is the planar rotation matrix; both disks share radius $r=1.0$. 
Define $h_i(x)=\|x-c_i\|^2-r^2$, $i=1,2$, so that $h_i(x)\ge0$ corresponds to being outside disk $i$. 
To obtain a smooth aggregate margin that approximates the non-smooth constraint 
$h(x)=\min\{h_1(x),h_2(x)\}$, we use the soft-min operator
\begin{flalign}\label{softmin}
& \operatorname{softmin}_\phi(a_1,\dots,a_m)
= -\frac{1}{\beta}\ln\!\left(\sum_{i=1}^m e^{-\beta a_i}\right), 
\end{flalign}
and define $h(x)=\operatorname{softmin}_\beta\big(h_1(x),h_2(x)\big)$ with $\phi=12$. 
The resulting safe set is $\mathcal{S}=\{x\mid h(x)\ge0\}$.

The BLF is constructed as
\begin{subequations}
\begin{flalign}
B(x) &= \frac{y(x)}{h(x)}, y(x) = 1 - \exp\{-a_y\|x\|^{p_y}\}, 
\label{BLF_y}
\end{flalign}    
\end{subequations}
where parameters $(a_y,p_y)=(0.1,0.5)$. The safeguarding controller is defined as $u_s(x)=-\lambda\nabla B(x)$. 
The critic is updated online using the Hamiltonian residual under the safe-exploration modification, where $(u_o+u_s)$ enters the dynamics term. The critic learning parameters are $(k_1,k_2,\eta,k_3)=(0.03,0.03,0.05,1.0)$ with $\Gamma(0)=I$, and the concurrent learning uses $N=6$ sampled trajectories.

We first demonstrate that, without tangential excitation, the closed-loop system may exhibit local trapping induced by the nonconvex safety geometry. 
With a small safeguarding gain $\lambda=0.2$, the trajectory becomes trapped near the lens region and converges to a non-origin equilibrium (see Fig.~\ref{fig:no_tan_multiIC}). 
In contrast, increasing the safeguarding gain to $\lambda=2.0$ enables the system to escape from the spurious equilibrium and converge toward the origin (see Fig.~\ref{fig:no_tan_compare}). Although the infinite-horizon formulation theoretically accounts for safety and should avoid spurious equilibria, the critic in this experiment is trained online. If the trajectory does not provide sufficient excitation around the safety boundary, the critic may fail to learn the embedded safety information in time. Consequently, the nominal optimal action and the safeguarding action may balance each other in a nonconvex region, yielding a locally consistent but globally suboptimal equilibrium. Increasing $\lambda$ strengthens the safety influence earlier and enriches the safety-relevant data observed by the critic, enabling the learned weights $W$ to encode safety information more effectively and escape the local trapping.

\begin{figure}[t]
  \centering
  \subfloat[$\lambda=0.2$, multiple initial conditions.]{
      \includegraphics[width=0.46\linewidth]{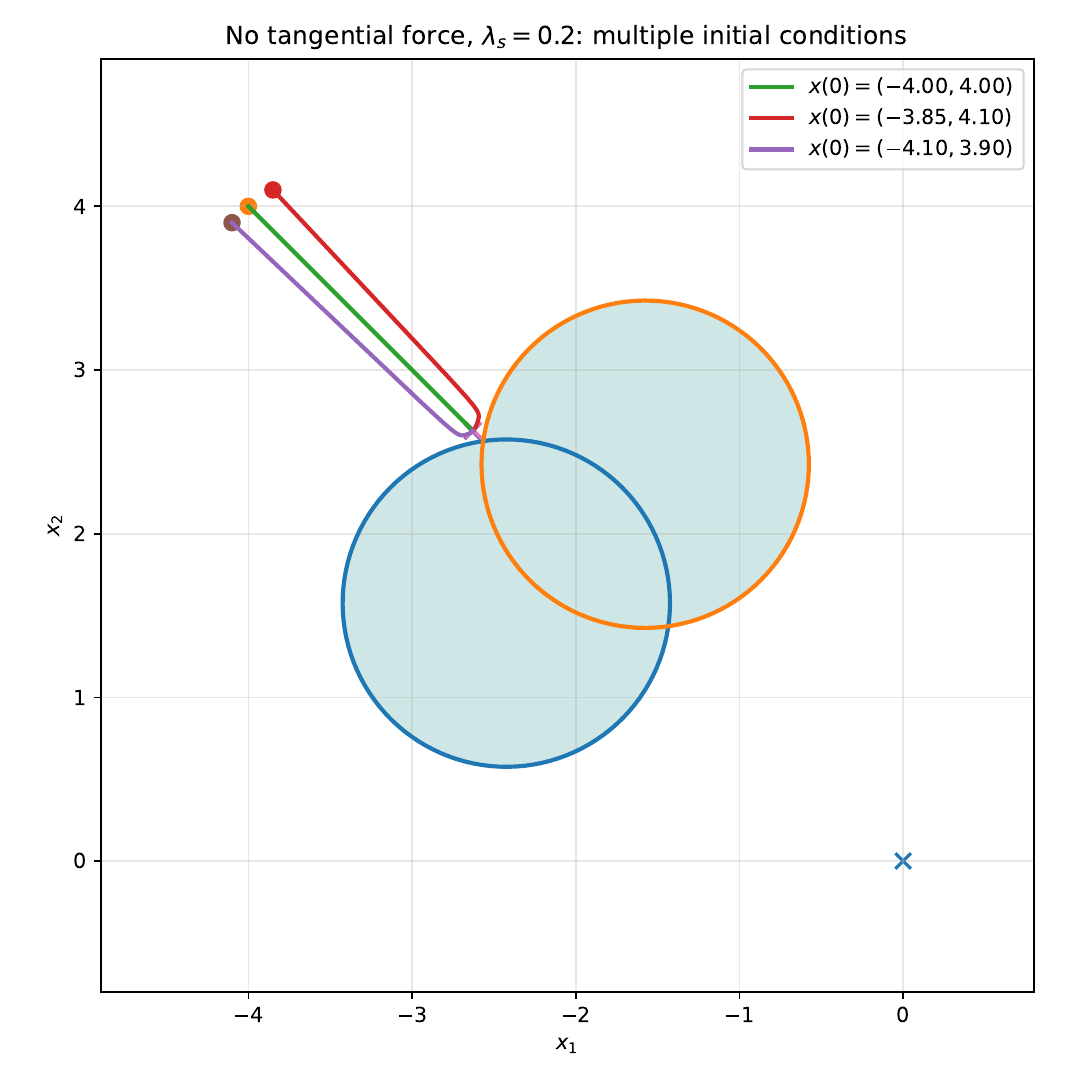}
      \label{fig:no_tan_multiIC}
  }
  \hfill
  \subfloat[$\lambda=0.2$ vs.\ $2.0$.]{
      \includegraphics[width=0.46\linewidth]{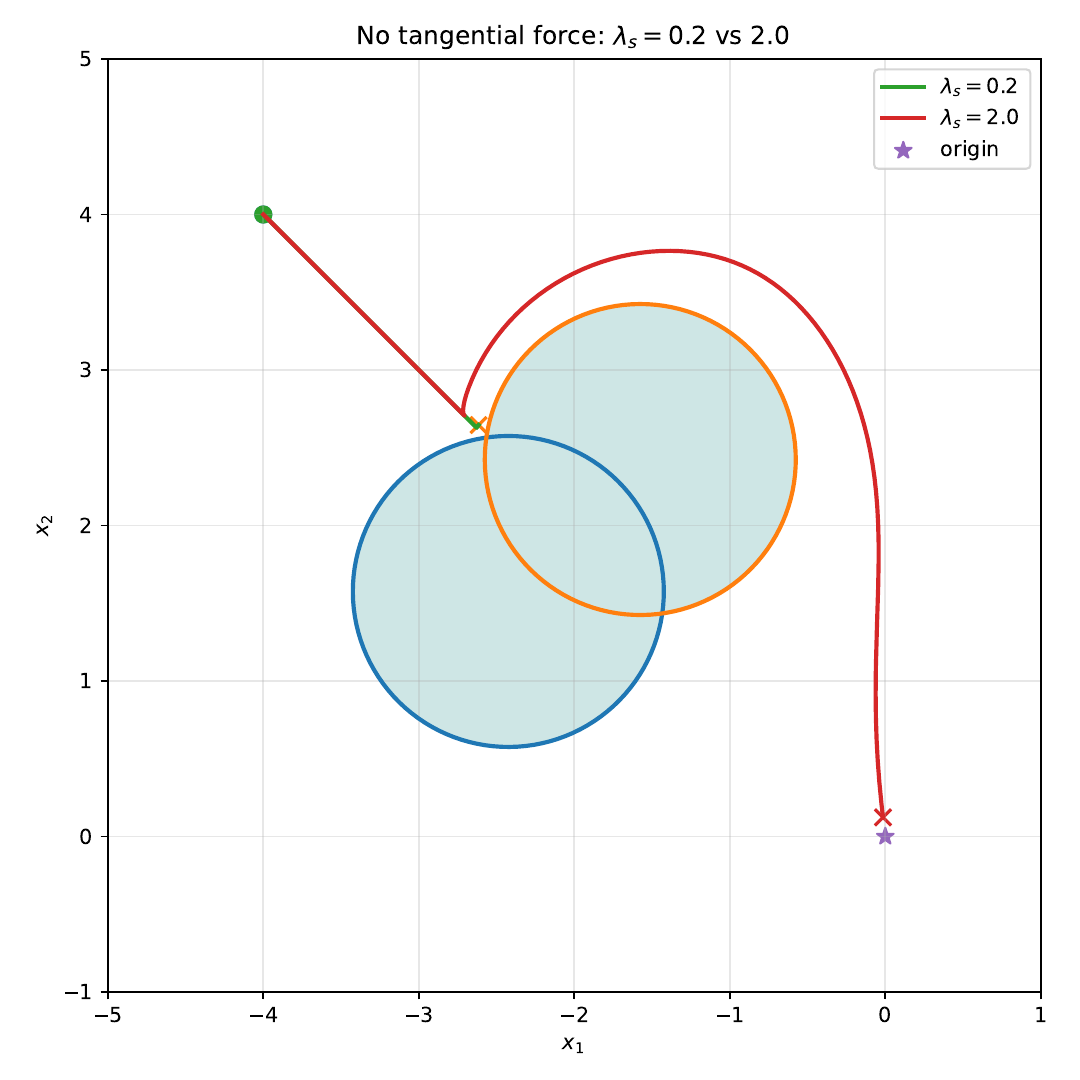}
      \label{fig:no_tan_compare}
  }
  \caption{Effect of safeguarding gain without tangential excitation. 
  Left: trajectories exhibit local trapping under small $\lambda$. 
  Right: increasing $\lambda$ enables convergence to the origin.}
  \label{fig:no_tan_combined}
\end{figure}

To eliminate the local trapping behavior observed in the previous counterexample, we activate the adaptive $\mathcal{L}_2$ tangential excitation introduced in the theoretical section. 
Recall that the tangential term takes the form $u_t(x)=p(x)t(x)$, where $t(x)$ denotes a direction orthogonal to the barrier gradient $\nabla B(x)$. 
For the two-dimensional case, the orthogonal direction can be obtained using the rotation matrix $t(x)
=
\frac{R_t\,\nabla B(x)}
{\|\nabla B(x)\|},
R_t=
\begin{bmatrix}
0 & -1 \\
1 & 0
\end{bmatrix}.$ By construction, $t(x)^\top\nabla B(x)=0$, so the tangential input does not directly affect the barrier decrease condition and therefore preserves the safety geometry. 
Instead, it induces motion along the level sets of the BLF, providing exploration directions that are not aligned with the barrier gradient. The excitation gain $p(x)$ is generated by the first-order filter described as \eqref{excitation_gain_p}. The parameter values used in the simulation are chosen as
$\alpha_t=2$, $k_t=0.5$, $\varepsilon=10^{-4}$, and the initial condition of the filter is $p(0)=0$. Since the tangential input is treated as a constructed disturbance channel, the critic update adopts the $\mathcal{H}_\infty$ formulation. The infinite-horizon objective is given by $V(x_0)=\int_0^\infty \big(x^\top Q x + \frac{1}{2}u^\top R u - \gamma^2 d^\top d\big)\,dt$, with the exponential decay factor $\gamma=2$. 

Figure~\ref{fig:tan_full} shows that, although the safeguarding gain remains $\lambda=0.2$, the introduction of the adaptive $\mathcal{L}_2$ tangential excitation enables the trajectory to leave the trap region and converge to the origin while preserving $h(x(t))\ge0$. 
Compared with the disturbance-free case, the system no longer settles near the boundary-induced equilibrium. 
The quiver arrows indicate that the tangential action becomes significant during the boundary-interaction phase and gradually diminishes afterward. The time history of $p(x)$ confirms that the excitation decays toward zero once the trajectory moves away from the boundary-dominated regime, consistent with the $\mathcal{L}_2$ design requirement. 
Meanwhile, the critic weights $W(t)$ exhibit pronounced variation during the interaction phase and subsequently stabilize, indicating that the tangential motion enriches the online data and allows the embedded safety information to be learned effectively. 

\begin{figure}[t]
  \centering
  \begin{minipage}[t]{0.48\linewidth}
    \vspace{0pt}
    \centering
    \includegraphics[width=\linewidth]{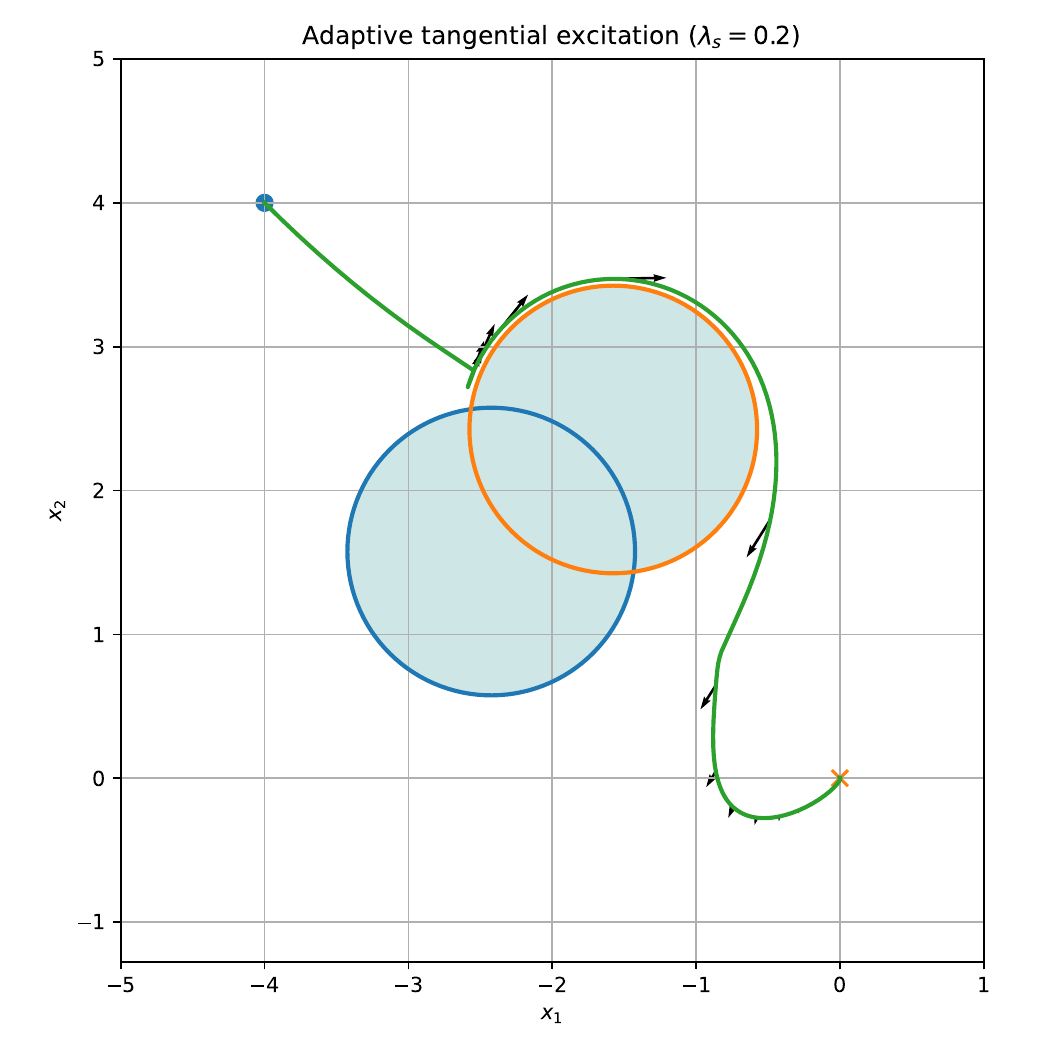}
  \end{minipage}
  \hfill
  \begin{minipage}[t]{0.48\linewidth}
    \vspace{0pt}
    \centering
    \includegraphics[width=\linewidth]{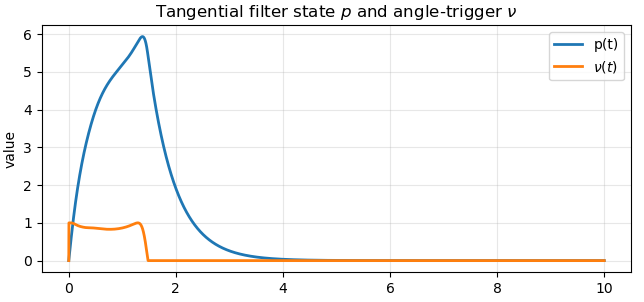}
    
    \vspace{2mm}
    
    \includegraphics[width=\linewidth]{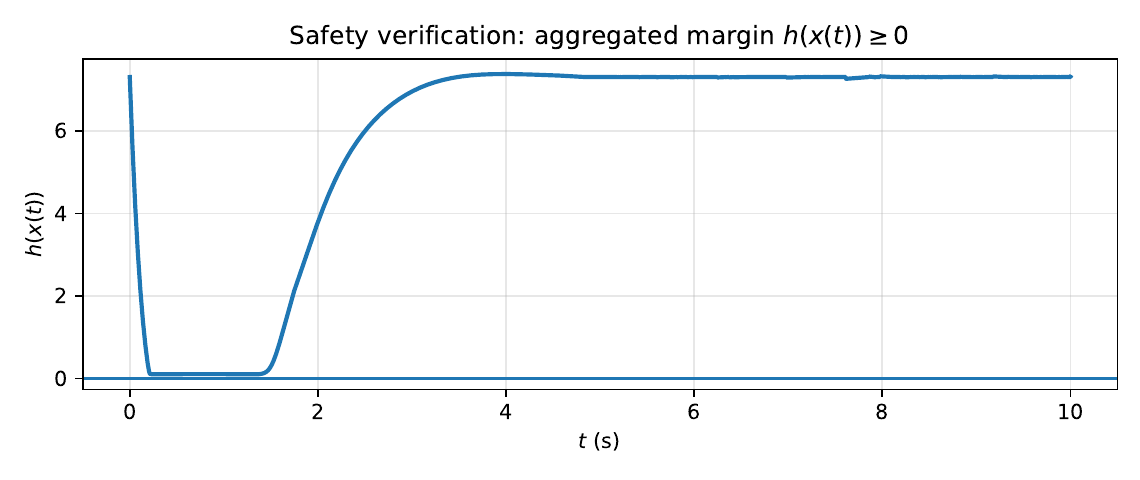}
  \end{minipage}
  
  \vspace{3mm}
  
  \includegraphics[width=0.7\linewidth]{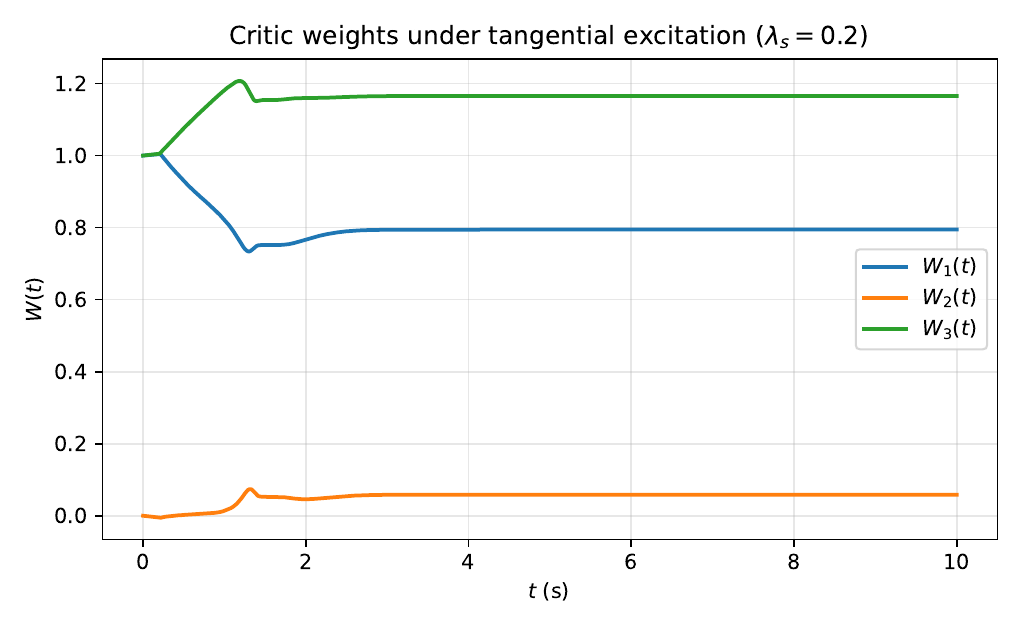}

  \caption{Adaptive tangential excitation under a weak safeguarding gain ($\lambda_s=0.2$). 
  Top-left: state trajectory with tangential disturbance arrows $u_t$. 
  Top-right: excitation dynamics ($z(t)$ and trigger $\psi(t)$) and the aggregated safety margin $h(x(t))$. 
  Bottom: evolution of critic weights $W(t)$.}
  \label{fig:tan_full}
\end{figure}

\subsection{Nonlinear systems with relative-degree-one}

In this subsection, we consider a nonlinear system with relative degree one under bounded disturbance given by 
$\dot{x}=f(x)+u-d$, where $x=[x_1,x_2]^\top\in\mathbb{R}^2$ and 
$f(x)=\begin{bmatrix}\tanh(x_1+x_2)x_2\\-\tanh(x_1+x_2)x_1\end{bmatrix}$. The disturbance channel is treated within the $H_\infty$ framework, and the adversarial input is generated by the Isaacs worst-case policy $d^\star=-\frac{1}{2\gamma^2}\nabla V^*(\zeta)$, where $\gamma=2$ specifies the disturbance attenuation level. The resulting disturbance signal is shown in Fig.~\ref{fig:rel1_disturbance}, where it remains bounded throughout the simulation horizon.

The safety constraint is defined by a circular obstacle, and the safe set is characterized by $h(x)=(x_1+3.5)^2+(x_2+0.5)^2-1 \ge 0$. The BLF is constructed as \eqref{BLF_y}. The safeguarding controller is chosen as $u_s=-\lambda\nabla B(x)$, and the applied control is decomposed as $u=u_o+u_s$, where $u_o$ is produced by the learned $H_\infty$ policy (see Fig.~\ref{fig:rel1_control_decomp}).

For this first-order system, $g(x)=I$ holds, which implies $\mathcal{L}_gB(x)=\nabla B(x)$. Consequently, the safeguarding action is directly aligned with the gradient direction of the barrier function. This structural property yields an inherent robustness feature: even in the presence of adversarial disturbance, the safeguarding term counteracts the tendency of the state to move toward the unsafe set. As observed in Fig.~\ref{fig:rel1_traj_compare}, no safety violation occurs and the trajectory maintains a positive geometric margin from the obstacle. This confirms that, in relative-degree-one settings, the BLF-based safeguarding controller exhibits natural disturbance rejection capability.

The simulation horizon is $T=25\,\mathrm{s}$ with time step $\Delta t=10^{-3}$. The initial condition is $x(0)=[-5,4]^\top$ and $\lambda(0)=5$. The quadratic cost matrices are chosen as $Q=\mathrm{diag}(2,2)$ and $R=\mathrm{diag}(1,1)$. The critic is initialized with $W(0)=[0.5,1.0,0.2,0,0,0.2]^\top$ and $\Gamma(0)=I$, and the update gains are selected as $k_1=0.05$, $k_2=0.02$, $\eta=0.01$, and $k_3=1$. Ten safe exploration-based sampled trajectories are drawn at each step for concurrent learning.

To improve smoothness and reduce conservatism, the Lagrange multiplier is treated as a dynamic state governed by $\dot{\lambda}=v$. A state-dependent lower reference is introduced,
\begin{flalign}\label{lambda_ref}
\lambda_{\mathrm{ref}}(x)=\lambda_{\min}+\Delta\lambda\,
\cdot\mathrm{sigmoid}\!\left(\frac{h_0-h(x)}{\Delta h}\right), 
\end{flalign}
where $\mathrm{sigmoid}(z):=(1+\exp(-z))^{-1}$ is the logistic sigmoid function, $\lambda_{\min}=0.2$, $\Delta\lambda=4.0$, $h_0=1.0$, and $\Delta h=3.0$. This reference acts as an early-warning threshold. The CBF constraint enforces $\lambda(t)\ge\lambda_{\mathrm{ref}}(x(t))$ and $\lambda(t)>0$ by restricting the virtual input $v$. Meanwhile, the performance objective incorporates the terms $q_\lambda(\lambda-\lambda^\star)^2+r_v v^2$ in the running cost, where $\lambda^\star=1$, $q_\lambda=2$, and $r_v=0.5$.

In the critic approximation, the augmented state is $\zeta=[x^\top,\tilde{\lambda}]^\top$, and the basis is constructed using $\tilde\lambda=\lambda-\lambda^\star$ to avoid bias from the steady-state value. Quadratic features of $(x,\tilde\lambda)$ are adopted to approximate the value function.

The learning process is illustrated in Fig.~\ref{fig:rel1_W_delta_strip}. The critic weights remain bounded and converge smoothly, while the Isaacs residual magnitude $\log|\delta(t)|$ decays rapidly, indicating that the Hamiltonian error diminishes as learning progresses. This confirms stable adaptation on the extended system.

The trajectory comparison in Fig.~\ref{fig:rel1_traj_compare} highlights the benefit of adaptive $\lambda$. In the constant-multiplier case, selecting $\lambda$ involves a trade-off between robustness and smoothness. In contrast, the adaptive scheme increases $\lambda$ only when approaching the constraint boundary, as reflected by the color-coded multiplier values, and relaxes it afterward. The detailed evolution of $\lambda(t)$ and its lower reference $\lambda_{\mathrm{ref}}(x(t))$ is shown in Fig.~\ref{fig:rel1_lambda_track}, where $\lambda(t)$ rises in advance of the boundary and subsequently returns toward $\lambda^\star$ once sufficient safety margin is restored.


\begin{figure}[t]
  \centering
  \includegraphics[width=\linewidth]{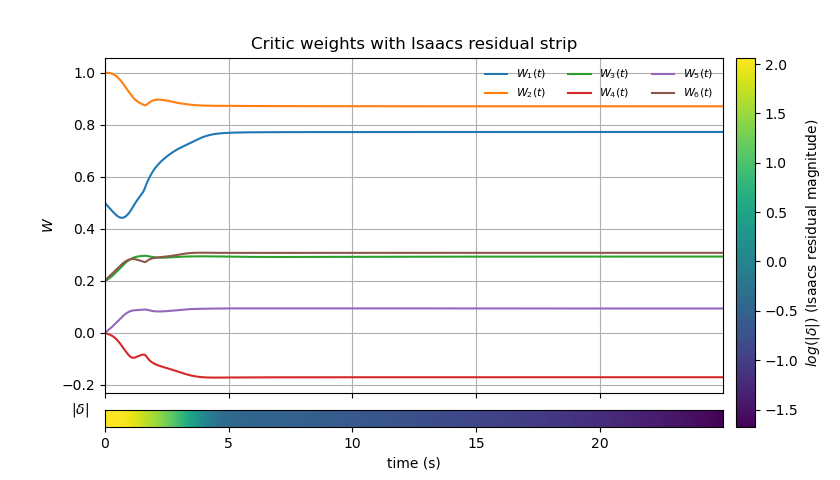}
  \caption{Critic weight evolution with Isaacs residual strip for the first-order system.
  The upper panel shows the evolution of critic weights $W_i(t)$.
  The color strip beneath encodes the magnitude of the Hamiltonian/Isaacs residual,
  represented as $\log |\delta(t)|$.
  The rapid fading of the color intensity indicates the decay of the residual,
  consistent with critic convergence.}
  \label{fig:rel1_W_delta_strip}
\end{figure}

\begin{figure}[t]
  \centering
  \includegraphics[width=0.8\linewidth]{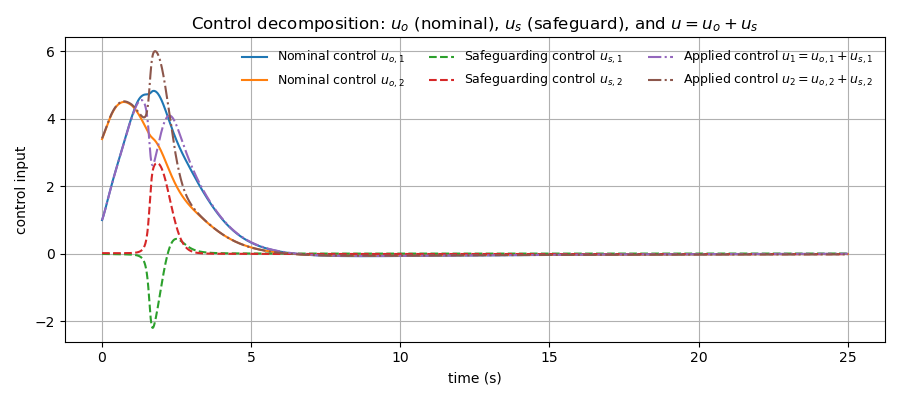}
  \caption{Control decomposition. The nominal control $u_o$ is produced by the learned $H_\infty$ policy, while the safeguarding control $u_s=-\lambda\nabla B(x)$ enforces safety. The applied input is $u=u_o+u_s$.}
  \label{fig:rel1_control_decomp}
\end{figure}

\begin{figure}[t]
  \centering
  \includegraphics[width=0.8\linewidth]{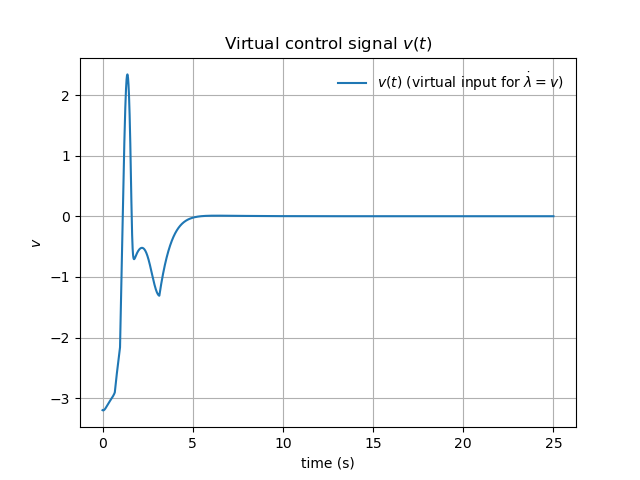}
  \caption{Virtual control input $v(t)$ for the multiplier dynamics. The auxiliary state $\lambda$ evolves according to $\dot{\lambda}=v$.}
  \label{fig:virtual_control_input}
\end{figure}

\begin{figure}[t]
  \centering
  \includegraphics[width=0.8\linewidth]{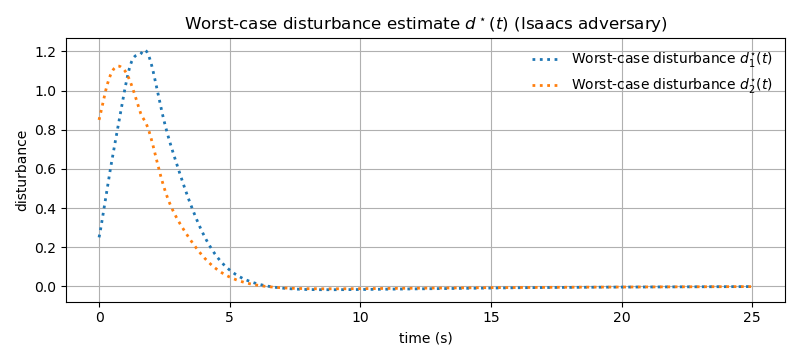}
  \caption{Worst-case disturbance estimate $d^\star(t)$ given by the Isaacs adversary in the $H_\infty$ formulation.}
  \label{fig:rel1_disturbance}
\end{figure}

\begin{figure}[t]
  \centering
  \includegraphics[width=0.8\linewidth]{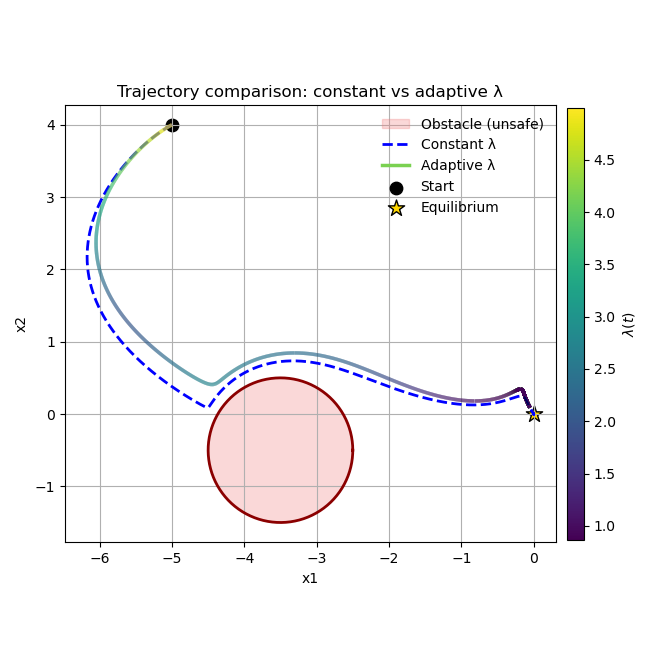}
  \caption{Trajectory comparison between constant and adaptive $\lambda$. The unsafe region (obstacle interior) is shaded. The adaptive-$\lambda$ trajectory is color-coded by $\lambda(t)$, indicating stronger safeguarding near the constraint.}
  \label{fig:rel1_traj_compare}
\end{figure}

\begin{figure}[t]
  \centering
  \includegraphics[width=0.8\linewidth]{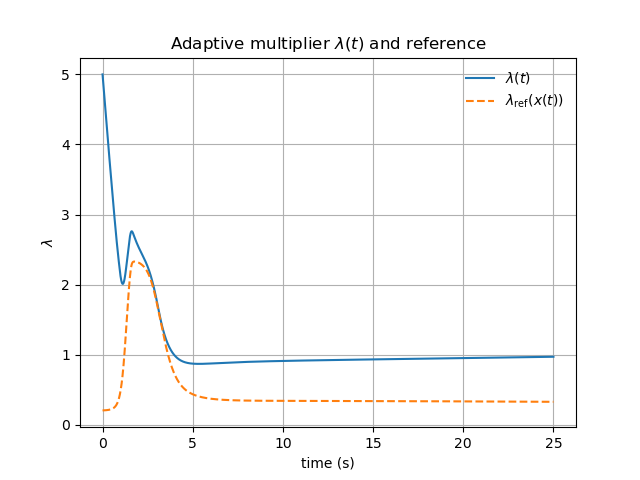}
  \caption{Adaptive multiplier $\lambda(t)$ and its state-dependent lower reference $\lambda_{\mathrm{ref}}(x(t))$ (implemented via a CBF lower-bound). The multiplier increases when approaching the constraint and relaxes afterward.}
  \label{fig:rel1_lambda_track}
\end{figure}

\subsection{High-relative-degree mobile robots in minefields}

We next evaluate the proposed safe exploration enhanced learning framework on a high-relative-degree plant in a minefield environment. The verification model is a planar double integrator with an additive flow-field disturbance,
$\dot p = v + d$ and $\dot v = u$,
where $p\in\mathbb{R}^2$ and $v\in\mathbb{R}^2$ denote position and velocity, respectively. The disturbance $d$ enters the position channel and represents a flow-field type perturbation. 
This disturbance is not explicitly canceled by the controller and is introduced to evaluate robustness against modeling mismatch.

The controller is implemented with an $\mathcal{H}_\infty$ critic-learning objective and an adaptive safeguarding multiplier. The running cost uses $Q=\mathrm{diag}(2.0,2.0,0.4,0.4)$ and $R=\mathrm{diag}(1.0,1.0)$, while the $\mathcal{H}_\infty$ attenuation level is set to $\gamma=4.0$. The critic weights are initialized from an LQR fit at the origin, and the continuous-time critic update laws are discretized by Euler integration. The key learning parameters are $k_1=0.1$, $k_2=0.1$, $\eta_g=0.01$, and $k_3=0.1$, with online sampling size $N=10$ and sampling standard deviation $0.05$. 

The safe set is defined by an outer boundary and multiple circular obstacles randomly generated inside the boundary. Since the safety constraints depend only on position, the system has relative degree two with respect to the safety output. Accordingly, the high-order safe set is constructed through the second-order constraint $\psi_{i,2} = \dot{h}_i + k_1 h_i - \frac{\phi}{h_i}$, where $k_1=2$ and the additional term $-\phi/h_i$ with $\phi=0.2$ provides robustness against modeling errors and disturbances. To obtain a smooth aggregate margin for multiple constraints, the high-order terms are combined using the soft-min operator $\psi(x)
= -\frac{1}{\beta}
\ln\!\left(
\sum_{i=1}^{m}
e^{-\beta \psi_{i,2}(x)}
\right),$ where $\beta=12$. The resulting robustness-enhanced high-order safe set is defined as $\mathcal{S}=\{x\mid \psi(x)\ge0\}$. Following the same construction as in the first-order case, the BLF is defined as $B(x) = \frac{y(x)}{\psi(x)}$, with $y(x) = 1 - \exp\{-0.1\|x\|^{0.5}\}$. The safeguarding controller is defined as $u_s(x)=-\lambda\nabla B(x)$.

The trajectory comparison under the same disturbance realization is shown in Fig.~\ref{fig:minefield_robust_compare}. 
When the robustness term $\phi/h$ is not included, the trajectory (red solid curve) approaches the active constraint and exhibits a temporary safety violation. 
This phenomenon indicates that the nominal high-order condition $\dot{h}_i + k_1 h_i \ge 0$ is insufficient to compensate for disturbance-induced drift in the position dynamics. 
Consequently, forward invariance of the safe set cannot be guaranteed under perturbations.
In contrast, when the robustness enhancement term is incorporated, the trajectory (blue dashed curve) remains strictly within the admissible region. 
The inset provides a zoomed-in view near the active constraint, clearly illustrating the separation between the trajectory and the boundary. 
The term $-\phi/h(x)$ introduces a barrier-type amplification as $h(x)\to 0^+$, which counteracts the disturbance effect and restores robust forward invariance of the high-order safe set.


\begin{figure}[t]
\centering
\includegraphics[width=0.8\linewidth]{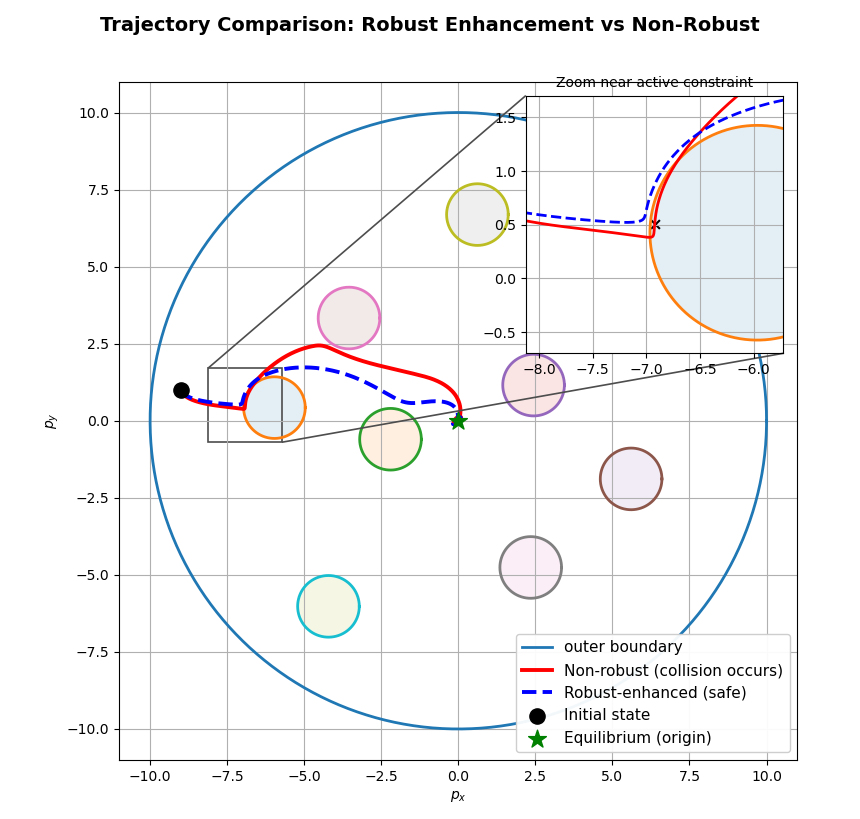}
\caption{
Trajectory comparison under the same disturbance realization.
The red solid curve represents the high-order safe set without robustness enhancement, 
which leads to constraint violation. 
The blue dashed curve represents the high-order safe set with the robustness enhancement term $-\phi/h(x)$, 
which maintains safety.
The inset shows a zoomed-in view near the active constraint.
}
\label{fig:minefield_robust_compare}
\end{figure}

\begin{figure}[t]
\centering
\includegraphics[width=0.8\linewidth]{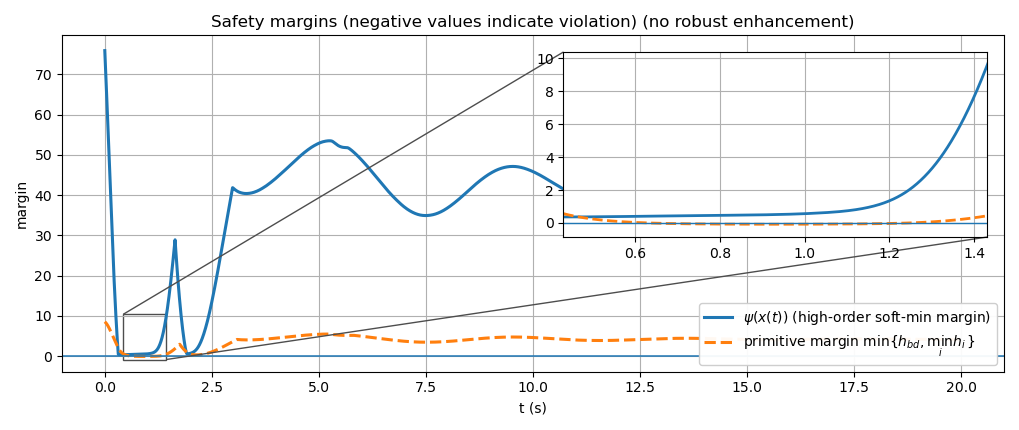}
\caption{
Safety margin evolution without robustness enhancement.
The solid blue curve represents the high-order soft-min safety margin $\psi(x(t))$,
while the dashed orange curve denotes the primitive margin 
$\min\{h_{\mathrm{bd}}, \min_i h_i\}$.
Negative values indicate constraint violation.
}
\label{fig:safety_margin_no_robust}
\end{figure}

To improve safety responsiveness while avoiding excessive conservatism, 
the safeguarding multiplier $\lambda$ is no longer selected as a constant 
but generated by an auxiliary dynamical system. Specifically, we introduce 
the first-order multiplier dynamics $\dot{\lambda}=v$, where $v$ is treated as a virtual control input. The state-dependent lower reference $\lambda_{\mathrm{ref}}(x)$ is constructed 
in the same manner as in the relative-degree-one case (see \eqref{lambda_ref}), 
which serves as an early-warning threshold that increases the safeguarding 
authority when the state approaches safety-critical regions. In the present high-order simulation, the parameters are selected as 
$\lambda_{\min}=0.5$, $\Delta\lambda=6.0$, $h_0=1.0$, and $\Delta h=3.0$.

To guarantee admissible multiplier values, CBFs are 
constructed for the auxiliary system to enforce both the lower constraint 
$\lambda \ge \lambda_{\mathrm{ref}}(x)$ and the upper constraint 
$\lambda \le \lambda_{\max}$ with $\lambda_{\max}=20$. The desired nominal multiplier is chosen as $\lambda^\star=2$, which 
represents a balanced safeguarding level observed empirically. To stabilize the auxiliary dynamics around this value, the HJI objective incorporates the quadratic penalty $q_{\lambda}(\lambda-\lambda^\star)^2+r_vv^2$, where $q_{\lambda}=1$ and $r_v=0.1$. This provides a trade-off between tracking the nominal multiplier and limiting rapid variations of $\lambda$.

Under the adaptive multiplier scheme, the state is augmented as 
$\zeta=[x^\top,\tilde{\lambda}]^\top$. The critic approximation is constructed on the extended state using a second-order polynomial basis.

\begin{figure}[t]
\centering
\includegraphics[width=0.78\linewidth]{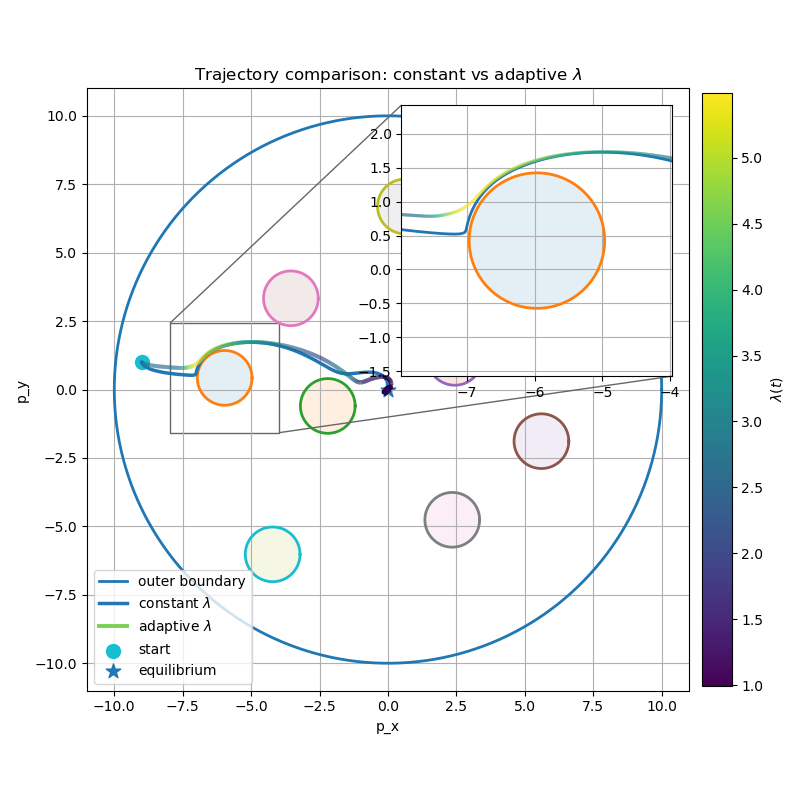}
\caption{
Trajectory comparison under constant and adaptive $\lambda$.
The solid curve corresponds to the constant $\lambda$ safeguarding strategy,
while the dashed curve represents the adaptive $\lambda$ strategy.
The inset provides a magnified view around the obstacle centered at 
$(-5.9660,\;0.4260)$, where the adaptive mechanism yields earlier deviation.
}
\label{fig:traj_constant_vs_adaptive_lambda}
\end{figure}

\begin{figure}[t]
\centering
\begin{minipage}[t]{0.48\linewidth}
    \centering
    \includegraphics[width=\linewidth,height=2.7cm]{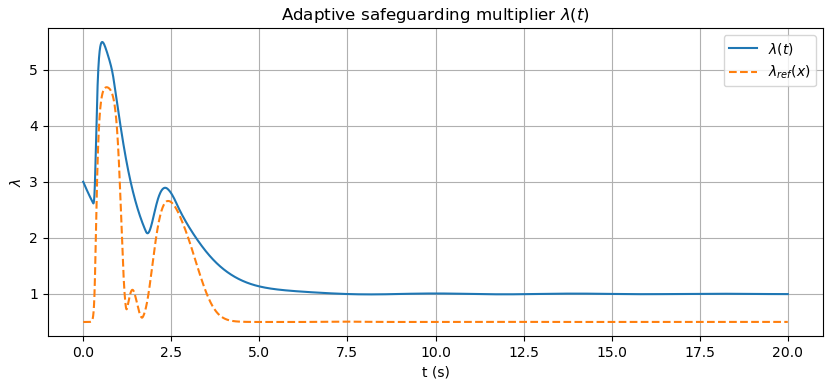}
\end{minipage}
\hfill
\begin{minipage}[t]{0.48\linewidth}
    \centering
    \includegraphics[width=\linewidth,height=2.7cm]{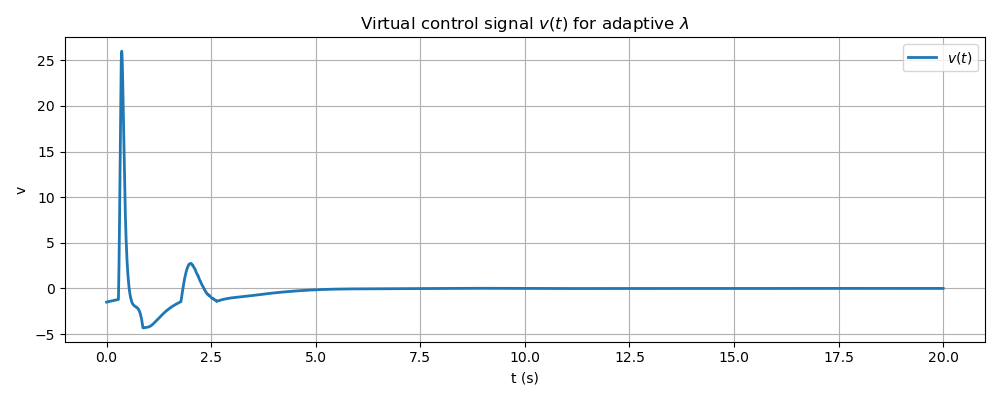}
\end{minipage}

\caption{
(a) Evolution of the adaptive safeguarding multiplier $\lambda(t)$
and its lower bound reference $\lambda_{\mathrm{ref}}(x)$.
(b) Virtual control input $v(t)$ governing the multiplier dynamics
$\dot{\lambda}=v$.
}
\label{fig:lambda_v}
\end{figure}

\begin{figure}[t]
\centering
\begin{minipage}[t]{0.48\linewidth}
    \centering
    \includegraphics[width=\linewidth,height=3.5cm]{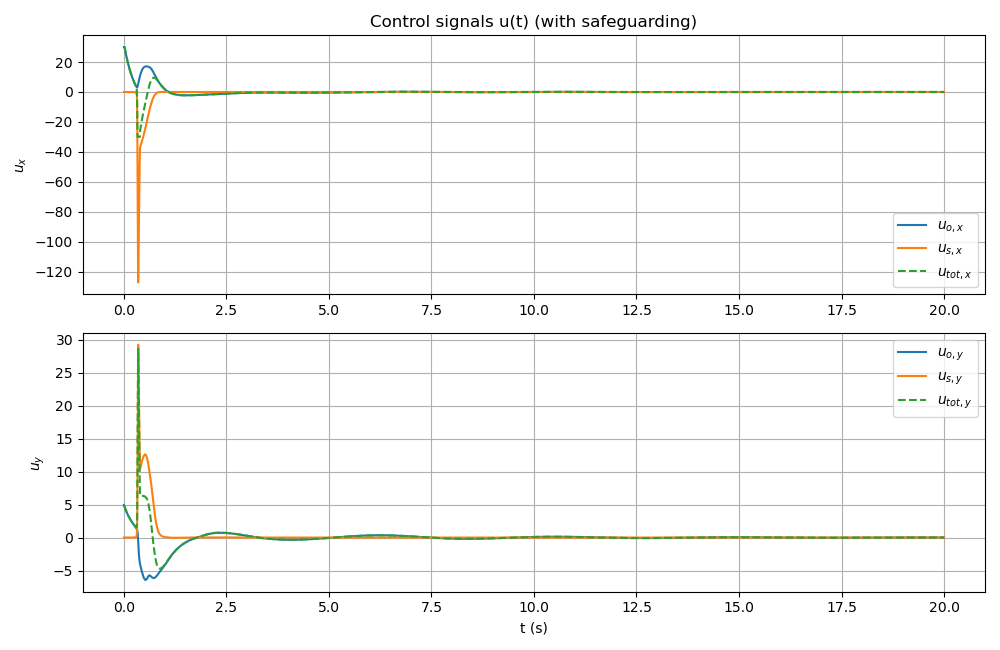}
\end{minipage}
\hfill
\begin{minipage}[t]{0.48\linewidth}
    \centering
    \includegraphics[width=\linewidth,height=3.5cm]{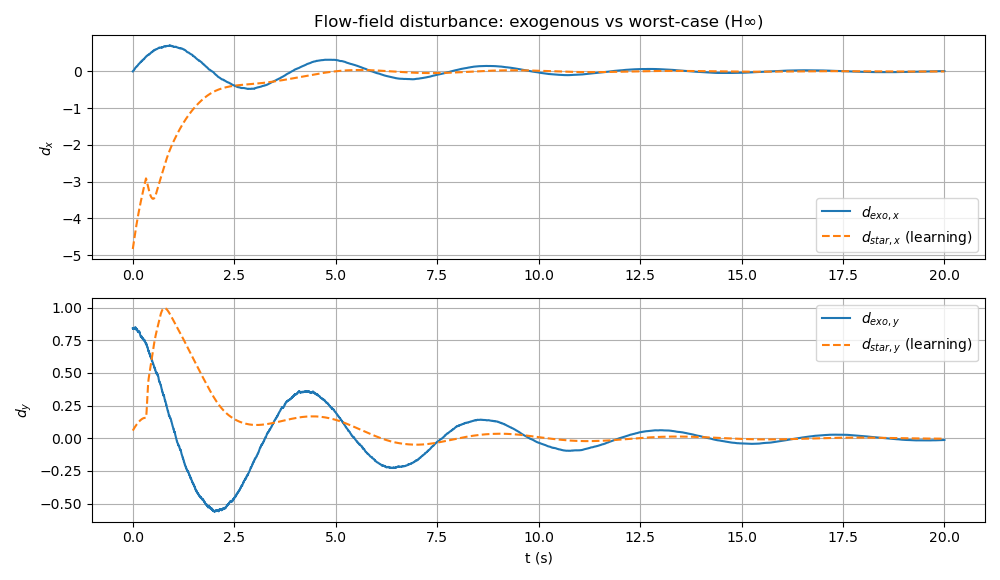}
\end{minipage}

\caption{
(a) Control signals under safeguarding. The total input is decomposed as
$u(t)=u_o(t)+u_s(t)$.
(b) Flow-field disturbance: exogenous signal $d_{\mathrm{exo}}(t)$ and
worst-case estimate $d^\star(t)$ from the $\mathcal{H}_\infty$ formulation.
}
\label{fig:control_disturbance}
\end{figure}

\begin{figure}[t]
\centering
\includegraphics[width=\linewidth]{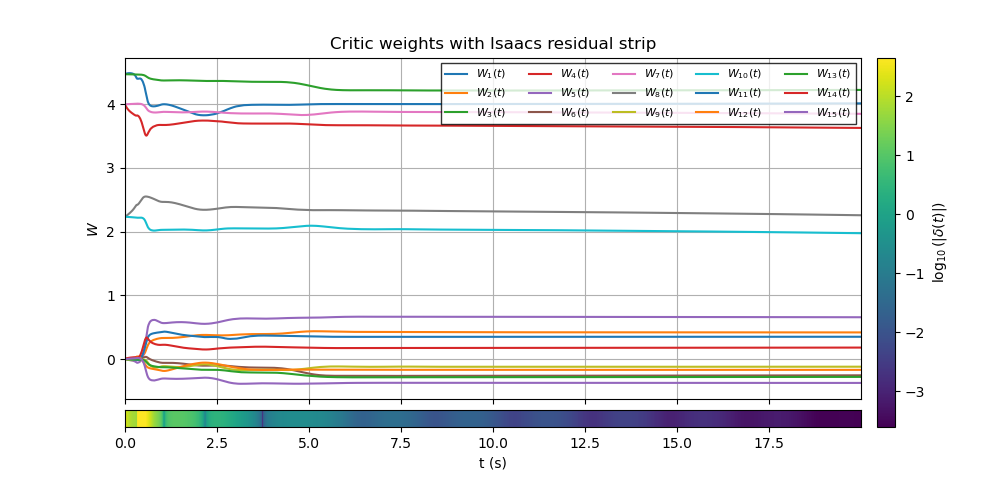}
\caption{
Evolution of critic weights $W(t)$ under safe exploration.
}
\label{fig:critic_weights}
\end{figure}

Fig.~\ref{fig:lambda_v}(a) shows the evolution of the adaptive multiplier.
When the system approaches the obstacle, 
$\lambda(t)$ rapidly increases beyond its nominal value, 
indicating early activation of the safeguarding mechanism.
As the state leaves the safety-critical region, 
$\lambda(t)$ smoothly decreases and converges to its steady value, 
demonstrating adaptive relaxation. Fig.~\ref{fig:lambda_v}(b) depicts the virtual control signal $v(t)$.
A positive peak appears when the system approaches the obstacle, 
driving $\lambda$ upward to strengthen the safeguarding effect.
Subsequently, the virtual control regulates the multiplier within the interval 
defined by the lower bound $\lambda_{\mathrm{ref}}(x)$ and the nominal value $\lambda^*$,
leading $\lambda(t)$ to gradually converge to $\lambda^*$. The resulting control signals and disturbance comparison are shown in Fig.~\ref{fig:control_disturbance}.
Compared with the constant-$\lambda$ case, the adaptive strategy modifies the safeguarding component only when necessary, thereby reducing unnecessary control amplification away from the constraint boundary. The disturbance comparison further illustrates the exogenous disturbance signal $d_{\mathrm{exo}}$ and the worst-case disturbance $d^\star$ arising from the $\mathcal{H}_\infty$ formulation. Fig.~\ref{fig:critic_weights} shows that the critic weights converge under the safe exploration framework. Meanwhile, the Hamiltonian error gradually decreases and approaches zero, indicating that the learning process successfully approximates the optimal solution. Compared with the constant multiplier strategy, the adaptive mechanism yields earlier deviation from the obstacle
while avoiding persistent over-conservatism in steady state.

\begin{figure}[t]
  \centering
  \includegraphics[width=0.8\linewidth]{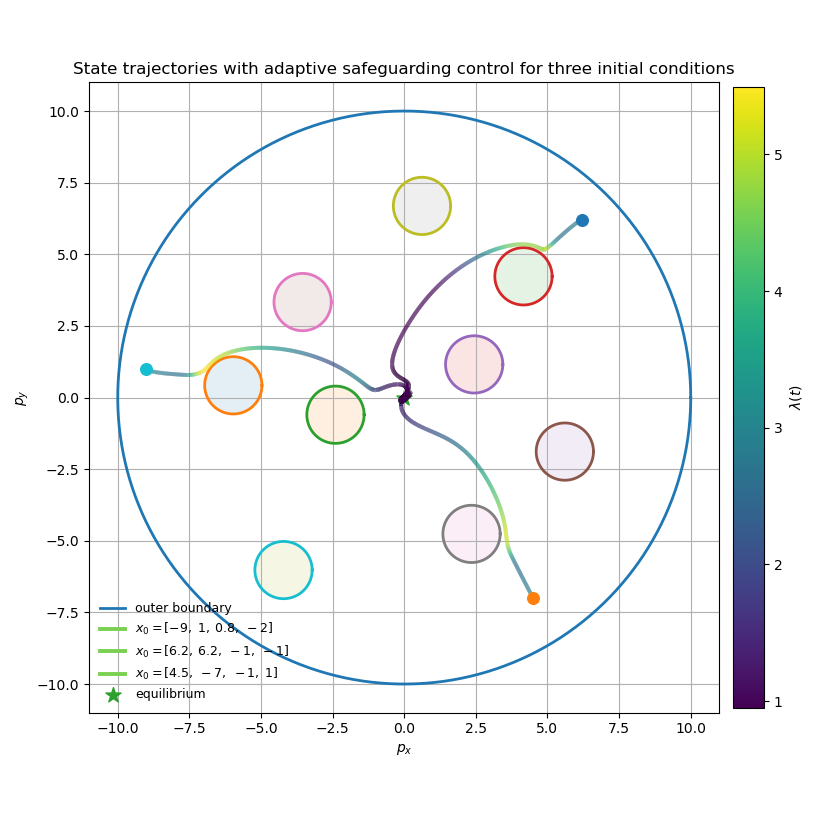}
  \caption{Minefield navigation for the high-relative-degree plant under flow-field disturbance. 
  The three trajectories start from $(-9.0,\,1.0,\,0.8,\,-2.0)$, $(6.2,\,6.2,\,-1.0,\,-1.0)$, and $(4.5,\,-7.0,\,-1.0,\,1.0)$, respectively.}
  \label{fig:case2_minefield_3starts}
\end{figure}

Fig.~\ref{fig:case2_minefield_3starts} shows the closed-loop trajectories from three distinct initial states,
$(-9.0,1.0,0.8,-2.0)$, $(6.2,6.2,-1.0,-1.0)$, and $(4.5,-7.0,-1.0,1.0)$.
In all cases, the system successfully navigates through the minefield and moves toward the origin without entering obstacles or violating the outer-boundary constraint. These results indicate that the proposed $\mathcal{H}_\infty$ critic learning, together with the safeguarding mechanism, can handle both the high-relative-degree dynamics and the presence of exogenous flow-field disturbances in a robust manner.

\section{Conclusion}
This paper developed a safety-aware infinite-horizon optimal control framework for nonlinear systems by embedding the BLF-based safeguarding action into the system dynamics and introducing a barrier-regulating auxiliary variable to adaptively adjust the safeguarding authority. This construction transforms the original safety-constrained problem into an unconstrained optimal control problem on an extended state space. To mitigate local trapping, an adaptive alignment-conditioned tangential excitation was introduced orthogonally to the safety direction and incorporated as an admissible $\mathcal{L}_2$ disturbance in the $H_\infty$ formulation without directly affecting the barrier decrease condition. For disturbed high-relative-degree systems, the recursive high-order safe-set construction was further augmented with barrier compensation terms, leading to a corresponding high-order BLF and a zero-sum differential game formulation. A safe-exploration-enhanced online critic learning scheme was employed for approximate solution of the resulting optimality conditions. Simulation results verified that the proposed method can mitigate local trapping, improve the safety--performance trade-off, and maintain safe operation under disturbances. Future work will address state-uncertainty-induced errors in BLF-based safeguarding control and investigate offline training schemes that reduce reliance on tangential excitation.

\bibliographystyle{IEEEtran}
\bibliography{reference}

\vfill

\end{document}